\newtheorem{proposition}{Proposition}
\newtheorem{lemma}{Lemma}
\newtheorem{corollary}{Corollary}
\theoremstyle{definition}
\newtheorem{remark}{Remark}
\newtheorem{example}{Example}
\newtheorem{definition}{Definition}
\newcommand{\real}{\mathbb{R}} 
\newcommand{\complex}{\mathbb{C}} 
\newcommand{\nat}{\mathbb N} 
\newcommand{\hi}{\mathcal{H}} 
\newcommand{\hik}{\mathcal{K}} 
\newcommand{\lh}{\mathcal{L(H)}} 
\newcommand{\lhik}{\mathcal{L(K)}} 
\newcommand{\lsh}{\mathcal{L}_s\mathcal{(H)}} 
\newcommand{\lshik}{\mathcal{L}_s\mathcal{(K)}} 
\newcommand{\sh}{\mathcal{S(H)}} 
\newcommand{\eh}{\mathcal{E(H)}} 
\newcommand{\ip}[2]{\left\langle\,#1\,|\,#2\,\right\rangle} 
\newcommand{\kb}[2]{|#1\rangle\langle#2|} 
\newcommand{\tr}[1]{{\rm tr}\left[#1\right]} 
\newcommand{\id}{\mathbbm{1}} 
\newcommand{\transpose}{\mathsf{T}} 
\newcommand{\rank}[1]{\mathrm{rank}\left(#1\right)}
\newcommand{\nnrank}[1]{\mathrm{rank}_+\left(#1\right)}
\newcommand{\psdrank}[1]{\mathrm{rank}_{psd}\left(#1\right)}
\newcommand{\rspan}[1]{\mathrm{span}_{\mathbb{R}}\left(#1\right)}
\newcommand{\cspan}[1]{\mathrm{span}_{\mathbb{C}}\left(#1\right)}
\newcommand{\lmax}[1]{\lambda_{\max}\left(#1\right)}
\newcommand{\im}{\mathrm{im}}
\title{Semi-device-independent channel identification with communication matrices}
\begin{document}

\author{Samgeeth Puliyil}
\email[Samgeeth Puliyil]{samgeethmohanan@gmail.com}
\address[Samgeeth Puliyil]{RCQI, Institute of Physics, Slovak Academy of Sciences, D\'ubravsk\'a cesta 9, 84511 Bratislava, Slovakia}

\author{Leevi Lepp\"{a}j\"{a}rvi}
\email[Leevi Lepp\"{a}j\"{a}rvi]{leevi.i.leppajarvi@jyu.fi}
\address[Leevi Lepp\"{a}j\"{a}rvi]{Faculty of Information Technology, University of Jyv\"{a}skyl\"{a}, 40100 Jyv\"{a}skyl\"{a}, Finland}

\author{Mário Ziman}
\email[Mário Ziman]{ziman@savba.sk}
\address[Mário Ziman]{RCQI, Institute of Physics, Slovak Academy of Sciences, D\'ubravsk\'a cesta 9, 84511 Bratislava, Slovakia}

\begin{abstract}
     We look into the task of differentiating between any two quantum channels and reconstructing them from the obtained measurement statistics with possibly limited information about the experimental set-up. We employ the communication matrix formalism where the measurement statistics of a prepare-and-measure scenario is represented as a stochastic communication matrix. In order to differentiate between any two quantum channels, the informational completeness of the set-up is both necessary and sufficient. On the other hand, if we want to uniquely characterize any quantum channel, in addition we also need to have a complete description of the set-up. We show that in many important cases we can deduce this information directly from the communication matrix of the set-up before applying the channel. Given that we trust the dimension of the system, we show that we can deduce the information completeness of the set-up directly from the rank of the communication matrix. Furthermore, we show that another quantity of the communication matrix, called the information storability, can be used to self-test the set-up (up to unitary or antiunitary freedom) for an important class of states and measurements. This provides us a semi-device-independent way to identify quantum channels from the prepare-and-measure statistics. Lastly, we consider scenarios where we might have some additional information about the channels or additional resources at our disposal which could help us relax some of the assumptions of the proposed scenario.
\end{abstract}

\maketitle

\section{Introduction}

Identifying an unknown quantum channel is an important task in quantum information processing and experimental set-ups. It is crucial e.g. for characterizing quantum gates in quantum computing setting \cite{Nielsenetal2021}, diagnosing noise and decoherence \cite{EndoBenjaminLi2018,Mangini2022}, benchmarking quantum devices \cite{Chowetal2009} and verifying calibration and error models used for error correction and control \cite{Mangini2024,blumekohout2025easybetterquantumprocess}. Since the set of all quantum channels cannot be perfectly (or unambiguously) discriminated \cite{Pirandola2019}, channel identification requires another set of tools.

One standard way of identifying channels is to perform channel tomography which allows for complete characterization and reconstruction of the transformation. An important primitive of channel tomography is the ability to differentiate between any two quantum channels. Naturally, if we are able to perform channel tomography for a set of channels, then we can also differentiate between them. On the other hand, if we are able to differentiate between any two quantum channels, then in principle we are able to identify any channel by differentiating it from all other (known) channels but one (which is itself). However, unless we have additional information about the channel, performing identification this way is practically impossible because almost certainly we would have to do infinite rounds of differentiation and in each round the other channel would have to be known. Thus, for identifying quantum channels, channel differentiation is a weaker concept than channel tomography.

Even though performing channel tomography enables to differentiate between channels, there are instances where merely differentiating between channels could be more cost-effective. For instance, we could think of a task where we have a device that prepares some fixed quantum channel and we would like to certify that the device prepares the same channel consistently. In this case, although in the first round it might be beneficial to perform channel tomography to know what channel the device is preparing, in the subsequent rounds it is just enough to know that the channel that is being prepared is always the same. This can be beneficial considering that performing full channel tomography can be more costly or time-consuming and it requires more information about the test set-up.

Both channel differentiation and tomography processes can be described in terms of prepare-and-measure scenarios, where one sends some quantum states through the channel and then makes measurements on the transformed states. The distinction of the two concepts becomes clear when considering what type of information one needs to know from the prepare-and-measure set-up in order to be able to accomplish either task. As per the standard formulation \cite{ChuangNielsen1997}, channel tomography is possible whenever one uses informationally complete states and measurement (meaning that the state and measurement operators both span the whole operator space) whose description is completely known. However, just to be able to differentiate between any two channels one does not need the complete information of the prepared states and performed measurement, but just the information about their informational completeness. 

In both cases, however, the information about the set-up needs to be trusted. The central question of our current work is, can we accomplish either of the tasks with less additional trusted information about the set-up and if so, then which type of information we need? We can frame our problem setting as follows: we want to buy an experimental set-up for channel tomography and/or for channel differentiation. The seller assures us that their set-up is able to differentiate between any two channels and/or it can be used for channel tomography for any channel. How can we verify the sellers claims without having to test all possible quantum channels?

We employ the recently introduced communication matrix formalism \cite{Frenkel2015,HeinosaariKerppoLeppäjärvi2020,HeinosaariKerppo2024,HeinosaariKerppoLeppäjärviPlavala2024,leppäjärvi2021measurementsimulabilityincompatibilityquantum} where we collect the measurement outcome statistics of the prepare-and-measure scenario into a stochastic matrix. We show that the informational completeness of the set-up can be readily obtained from the rank of the communication matrix related to the prepare-and-measure scenario before the channel is employed. Once we have verified that the set-up is indeed informationally complete we know that it can be used to differentiate between any two channels. Furthermore, we show that an important class of prepare-and-measure scenarios can be self-tested \cite{Tavakolietal2018,FarkasKaniewski2019,Navascuesetal2023} (up to unitary or antiunitary freedom) directly from another quantity of the communication matrix called the information storability. Once the set-up has been self-tested it can be readily used to perform channel tomography up to unitary or antiunitary freedom. This provides a semi-device-independent way to identify any given channel. We then continue to analyze cases where we have some prior information about the channel or some additional resources at our disposal allowing us to relax some of the previous conditions for the set-up. We also look into how to obtain information about some specific properties of channels from their communication matrices.

We note that there are other \emph{calibration-free} channel tomography protocols which also avoid having the full description of the prepare-and-measure set-up, such as gate set tomography \cite{Nielsenetal2021,Merkeletal2013,Greenbaum2015} and shadow process tomography \cite{Kunjummen2023}. The goal of our work is to provide a simple self-consistent description in the operationally intuitive communication matrix framework. Our method works in the single-channel level with respect to the unknown prepare-and-measure set-up based on self-testing the set-up or some property of the set-up.

We start by introducing the necessary mathematical background and the communication matrix formalism for prepare-and-measure scenarios in Sec. \ref{sec:prel}. Next, in Sec. \ref{sec:diff-tomography}, we discuss what properties are required from the prepare-and-measure set-up in order to perform channel differentiation and tomography. In Sec. \ref{sec:sdi-identification} we present our main results about how these properties can be deduced from the communication matrix related to the prepare-and-measure set-up. In Sec. \ref{sec:additional} we explore some scenarios where we might have some additional information about the channels or additional resources at our disposal so that the conditions related to the introduced semi-device-independent scenario can be relaxed in order to infer some weaker properties of the set-up. Finally in Sec. \ref{sec:concl} we present our final remarks.

\section{Communication matrices and prepapre-and-measure scenarios}\label{sec:prel}
In an operational theory, communication tasks between two parties often involve one of them (say, Alice) sending an encoded message (state) to the other (say, Bob), who decodes the message using a measurement. Such schemes are called {\it prepare-and-measure scenarios}. Here, we recall such a prepare-and-measure scenario discussed in \cite{Frenkel2015,HeinosaariKerppoLeppäjärvi2020,HeinosaariKerppo2024,HeinosaariKerppoLeppäjärviPlavala2024,leppäjärvi2021measurementsimulabilityincompatibilityquantum}, where a simple yet useful tool of communication matrices in operational theories was introduced. We will only look at the standard operational quantum theory for the time being.

We denote by $\mathcal{L}(\mathcal{H})$ the set of all bounded linear operators acting on a $d$-dimensional Hilbert space $\mathcal{H}$, and by $\mathcal{L}_s(\mathcal{H})$ the set of all bounded linear self-adjoint operators on $\mathcal{H}$. Then the state space on $\hi$ is the set $\mathcal{S}(\mathcal{H})$ of all \emph{density operators} on $\hi$ (positive semidefinite operators with unit trace). The set of \emph{effects}, i.e. the set of selfadjoint operators $E$ between the zero operator $O$ and the identity operator $\id_d$ on $\hi$, i.e., $O \leq E \leq \id_d$  is denoted by $\mathcal{E}(\mathcal{H})$. A measurement on $\hi$ is then described by a \emph{positive operator-valued measure (POVM)} $M$ which in the case of finite outcomes, say $n < \infty$ outcomes, is described by a collection of effects $\{M_k \}_{k=1}^n \subset \eh$ such that $\sum_{k=1}^n M_k = \id_d$. We denote the set of POVMs on $\hi$ with $n$ outcomes by $\mathcal{M}_n(\hi)$. By $\mathcal{P}(\mathcal{H})$ we denote the set of all projections (operators $P \in \lh$ with $P^2=P^*=P$, where $P^*$ is the adjoint of $P$). We also denote the set of positive integers from 1 to $n$, $\{1,...,n\}$, by $[n]$. 

Our prepare-and-measure scenario is the following: Alice prepares a state from a set of some $m$ states $\{\varrho_j\}_{j\in[m]}\subset\mathcal{S}(\mathcal{H})$ and sends it to Bob, who performs a fixed measurement $M \in \mathcal{M}_n(\hi)$ with $n$ outcomes on the state. The process is repeated until we get the measurement statistics given by the Born rule, which are used to construct an object called the \textit{communication matrix} (or channel matrix) $C= (C_{jk})_{j,k} = (\tr{\varrho_jM_k})_{j,k}$, which is a non-negative row-stochastic matrix \cite{Frenkel2015,HeinosaariKerppoLeppäjärvi2020}. A schematic diagram of the process is shown in Fig.\ref{figure1}. Given a set of states $\{\varrho_j\}_{j=1}^m$ and a measurement $M$ with $n$ outcomes, the induced communication matrix is denoted by $C_{\varrho, M}$. We note that for a given communication matrix the implementation might not be unique (we come back to this point in Sec. \ref{sec:sdi-identification}). The set of all $m\times n$ communication matrices that can be implemented with a $d$-dimensional quantum system is denoted by $\mathcal{C}_{m,n}(\mathcal{H})$, and the set of all communication matrices that can be implemented with a $d$-dimensional quantum system is denoted by $\mathcal{C}(\mathcal{H}) = \bigcup_{m,n}\mathcal{C}_{m,n}(\mathcal{H})$.

We can interpret any such prepare-and-measure scenario as some kind of a communication task between Alice and Bob. Then a communication matrix captures the essence of the communication task described by this prepare-and-measure scenario. Although the implementation of a given communication matrix might not be unique, the communication task itself can be identified from the communication matrix. For example, the identity matrix $I_n$ corresponds to a task of \emph{perfectly distinguishing} $n$ states \cite{Barnett2009,Bae_2015}. Likewise, the matrix $A_n$ with $(A_n)_{jj}= 0$ and $(A_n)_{jk}= 1/(n-1)$ for $j, k \in [n]$, where $j \neq k$, corresponds to the task of uniformly \emph{antidistinguishing (excluding)} $n$ states \cite{Caves2002Conditions,Pusey2012On,Bandyopadhyay2014Conclusive,Heinosaari2018Antidistinguishability,Leifer2020Noncontextuality,Havlicek2020Simple,russo2023inner}. Between the two extermes $I_n$ and $A_n$ we can also introduce the \emph{noisy uniform antidistinguishability communication matrices} $D_{n, \epsilon}$ defined as $(D_n)_{jj} = 1- \epsilon$ and $(D_n)_{jk} = \epsilon/(n-1)$ for $j, k \in [n]$, where $j \neq k$, where $\epsilon \in [0,1]$ describes noise in the (anti)distinguishability task. 

In general, set of implementable communication matrices capture many of the important characteristics of a theory. For example in qubit we know that we can implement $I_n$ if and only if $n\leq 2$ (this reflects the fact that we can only distinguish two orthogonal pure states in qubit at a time), $A_n$ if and only if $n \leq 4$ (we can only distribute $4$ points on the Bloch sphere with equal distances) and $D_{n,\epsilon}$ if and only if $n\leq 4$ and $\epsilon \in [2/n,1]$ (the basic decoding theorem \cite{SchumaherWestmorelandbook}).

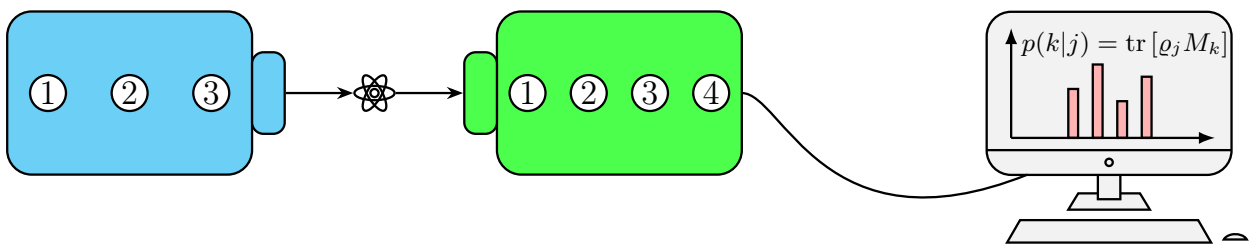
\begin{figure}[ht]
\centering
\begin{adjustbox}{width=\textwidth}
\begin{tikzpicture}[
    box/.style={rounded corners=3mm, draw, minimum width=1cm, minimum height=1cm},
    jar/.style={rounded corners=3mm, draw, minimum width=3cm, minimum height=2cm},
    jarcap/.style={rounded corners, draw, minimum width=0.4cm, minimum height=1cm},
    thick, numcircle/.style={circle, draw, minimum size=4mm, inner sep=1pt},
    nucleus/.style={circle, draw, minimum size=1.5pt, inner sep=1pt},
    orbit1/.style={ellipse, draw, rotate=0, minimum width=14pt, minimum height=5pt, inner sep=1pt},
    orbit2/.style={ellipse, draw, rotate=60, minimum width=14pt, minimum height=5pt, inner sep=1pt},
    orbit3/.style={ellipse, draw, rotate=120, minimum width=14pt, minimum height=5pt, inner sep=1pt},
    computer/.style={rounded corners=3mm, draw, minimum width=3cm, minimum height=2cm},
    stand/.style={rectangle, draw, minimum width=2mm, minimum height=3mm},
    standbase/.style={trapezium, draw, trapezium stretches body=true, trapezium angle=30, minimum height=2mm, minimum width=1cm, inner sep=1pt},
    keyboard/.style={trapezium, draw, trapezium stretches body=true, trapezium angle=60, minimum width=2.5cm, minimum height=0.15cm},
    mouse/.style={rotate=90, scale=0.6}
]

\node[thick, jar, fill=cyan!50] (leftjar) at (-6,0) {};
\node[thick, jarcap, fill=cyan!50] (leftjarcap) at (-4.3,0) {};
\foreach \i in {1,...,3}
    \node[thick, numcircle, fill=white!100] (left\i) at ({-6 + (\i - 2)}, 0) {\i};

\node[thick, jar, fill=green!70] (rightjar) at (0,0) {};
\node[thick, jarcap, fill=green!70] (rightjarcap) at (-1.7,0) {};
\foreach \i in {1,...,4}
    \node[thick, numcircle, fill=white!100] (right\i) at ({(\i - 2.5)*0.75}, 0) {\i};

\node[thick, nucleus] (nucleus) at (-3,0) {};
\node[thick, orbit1] (orbit1) at (-3,0) {};
\node[thick, orbit2] (orbit2) at (-3,0) {};
\node[thick, orbit3] (orbit3) at (-3,0) {};

\draw[thick, -{Stealth[length=2mm]}] (leftjarcap.east) -- (orbit1.west);
\draw[thick, -{Stealth[length=2mm]}] (orbit1.east) -- (rightjarcap.west);

\node[thick, standbase, fill=black!5] (standbase) at (6,-1.33) {};
\node[thick, above=-1mm of standbase, stand, fill=black!5] (stand) {};
\node[thick, above=-0.3mm of stand, computer, fill=black!5] (monitor) {};
\node[thick, below=0.1cm of standbase, keyboard, fill=black!5] (keyboard) {};
\draw[thick, fill=black!5] ([xshift=5mm, yshift=-1mm]keyboard.east) arc (20:160:0.15cm and 0.1cm) -- cycle;
\draw[thick] (4.5,-0.7) -- (7.5,-0.7);
\node[thick, circle, draw, inner sep=0.9pt] (button) at (6,-0.85) {};

\begin{scope}[xshift=5.5cm, yshift=-0.55cm, scale=0.3]
    \foreach \x/\h in {0/2, 1/3, 2/1.5, 3/2.5}
        \draw[fill=red!30] (\x,0) rectangle (\x+0.4,\h);
\end{scope}

\node[] at (6.2,0.6){\scriptsize$p(k|j)=\tr{\varrho_jM_k}$};

\draw[thick, -{Latex[length=2mm]}] (4.8,-0.55) -- (7.3,-0.55);
\draw[thick, -{Latex[length=2mm]}] (4.8,-0.55) -- (4.8,0.8);

\draw[thick] (1.5,0) .. controls (2,0) and (2.5,-2) .. (5,-1);

\end{tikzpicture}
\end{adjustbox}
\caption{A schematic diagram for the prepare-and-measure scenario.}\label{figure1}
\end{figure}

Communication tasks such as the one mentioned above use information \emph{channels} to send messages from Alice to Bob. In quantum theory, a channel $\Phi: \lh \to \mathcal{L(K)}$ between Hilbert space $\hi$ and $\mathcal{K}$ is given by a completely positive (CP) ($\Phi \otimes id_r$ is positive map for all $r \in \nat$) and trace-preserving (TP) ($\tr{\Phi(X)} =\tr{X}$ for all $X \in \lh$) maps. We denote the set of CPTP maps $\Phi: \lh \to \hik$ by $\mathrm{Ch}(\hi,\hik)$. In general, the transformation using a CPTP map can be thought of as part of the state preparation (Schr{\"o}dinger picture) or as part of the measurement (Heisenberg picture), which makes the task equivalent to the procedure mentioned above. However, what happens when the object of interest is the transformation itself?

In order to capture the transformations explicitly, we propose a modification to the previous scenario: Let $\mathcal{H}$ and $\mathcal{K}$ be two Hilbert spaces of dimensions $d_i$ and $d_o$, respectively. Alice prepares a quantum state from a given set of states $\{\varrho_j\}_{j\in[m]}\subset\mathcal{S}(\mathcal{H})$, which is sent to Bob through a quantum channel $\Phi:\mathcal{L}(\mathcal{H})\to\mathcal{L}(\mathcal{K})$. Bob performs a fixed measurement $M \in \mathcal{M}_n(\hik)$ on the state. A schematic diagram is shown in Fig. \ref{figure2}.

In this modified prepare-and-measure scheme, the communication matrix is given by
\begin{equation}\label{eq3.2}
    C' = (C'_{jk})_{j,k} = (\tr{\Phi(\varrho_j)M_k})_{j,k} \, .
\end{equation}
The set of all $m\times n$ communication matrices that can be constructed in the presence of some channel $\Phi:\mathcal{L}(\mathcal{H})\to\mathcal{L}(\mathcal{K})$ is given by $\mathcal{C}_{m,n}(\mathcal{H},\mathcal{K})$, and the set of all communication matrices that can be constructed in the presence of some channel $\Phi:\mathcal{L}(\mathcal{H})\to\mathcal{L}(\mathcal{K})$ is given by $\mathcal{C}(\mathcal{H},\mathcal{K}) = \bigcup_{m,n}\mathcal{C}_{m,n}(\mathcal{H},\mathcal{K})$. 

\begin{figure}[ht]
\centering
\begin{adjustbox}{width=\textwidth}
\begin{tikzpicture}[
    box/.style={rounded corners=3mm, draw, minimum width=1cm, minimum height=1cm},
    jar/.style={rounded corners=3mm, draw, minimum width=3cm, minimum height=2cm},
    jarcap/.style={rounded corners, draw, minimum width=0.4cm, minimum height=1cm},
    thick, numcircle/.style={circle, draw, minimum size=4mm, inner sep=1pt},
    nucleus/.style={circle, draw, minimum size=1.5pt, inner sep=1pt},
    orbit1/.style={ellipse, draw, rotate=0, minimum width=14pt, minimum height=5pt, inner sep=1pt},
    orbit2/.style={ellipse, draw, rotate=60, minimum width=14pt, minimum height=5pt, inner sep=1pt},
    orbit3/.style={ellipse, draw, rotate=120, minimum width=14pt, minimum height=5pt, inner sep=1pt},
    computer/.style={rounded corners=3mm, draw, minimum width=3cm, minimum height=2cm},
    stand/.style={rectangle, draw, minimum width=2mm, minimum height=3mm},
    standbase/.style={trapezium, draw, trapezium stretches body=true, trapezium angle=30, minimum height=2mm, minimum width=1cm, inner sep=1pt},
    keyboard/.style={trapezium, draw, trapezium stretches body=true, trapezium angle=60, minimum width=2.5cm, minimum height=0.15cm},
    mouse/.style={rotate=90, scale=0.6}
]

\node[thick, jar, fill=cyan!50] (leftjar) at (-6,0-3) {};
\node[thick, jarcap, fill=cyan!50] (leftjarcap) at (-4.3,0-3) {};
\foreach \i in {1,...,3}
    \node[thick, numcircle, fill=white!100] (left\i) at ({-6 + (\i - 2)}, 0-3) {\i};

\node[thick, jar, fill=green!70] (rightjar) at (0,0-3) {};
\node[thick, jarcap, fill=green!70] (rightjarcap) at (-1.7,0-3) {};
\foreach \i in {1,...,4}
    \node[thick, numcircle, fill=white!100] (right\i) at ({(\i - 2.5)*0.75}, 0-3) {\i};

\node[thick, box, fill=black!15] (channel) at (-3,0-3) {$\Phi$};

\draw[thick, -{Stealth[length=2mm]}] (leftjarcap.east) -- (channel.west);
\draw[thick, -{Stealth[length=2mm]}] (channel.east) -- (rightjarcap.west);

\node[thick, standbase, fill=black!5] (standbase) at (6,-1.33-3) {};
\node[thick, above=-1mm of standbase, stand, fill=black!5] (stand) {};
\node[thick, above=-0.3mm of stand, computer, fill=black!5] (monitor) {};
\node[thick, below=0.1cm of standbase, keyboard, fill=black!5] (keyboard) {};
\draw[thick, fill=black!5] ([xshift=5mm, yshift=-1mm]keyboard.east) arc (20:160:0.15cm and 0.1cm) -- cycle;
\draw[thick] (4.5,-0.7-3) -- (7.5,-0.7-3);
\node[thick, circle, draw, inner sep=0.9pt] (button) at (6,-0.85-3) {};

\begin{scope}[xshift=5.5cm, yshift=-3.55cm, scale=0.3]
    \foreach \x/\h in {0/1.5, 1/2.5, 2/1, 3/3}
        \draw[fill=red!30] (\x,0) rectangle (\x+0.4,\h);
\end{scope}

\node[] at (6.2,0.6-3){\scriptsize$p'(k|j)=\tr{\varrho'_jM_k}$};

\draw[thick, -{Latex[length=2mm]}] (4.8,-0.55-3) -- (7.3,-0.55-3);
\draw[thick, -{Latex[length=2mm]}] (4.8,-0.55-3) -- (4.8,0.8-3);

\draw[thick] (1.5,0-3) .. controls (2,0-3) and (2.5,-2-3) .. (5,-1-3);

\end{tikzpicture}
\end{adjustbox}
\caption{A schematic diagram for the prepare-and-measure scenario in the presence of a channel.}\label{figure2}
\end{figure}
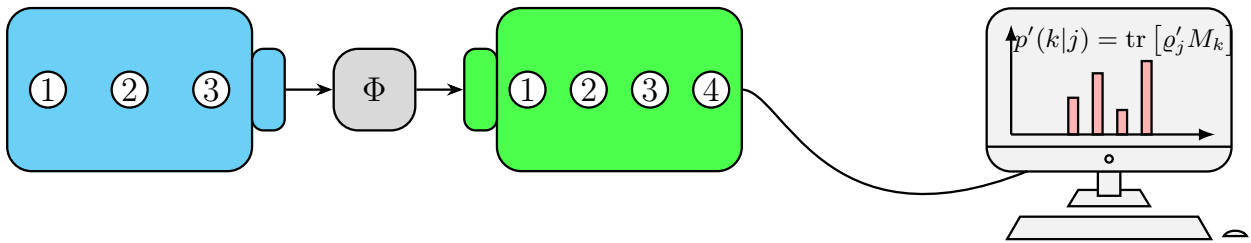

In \cite{HeinosaariKerppoLeppäjärvi2020} the authors introduced some mathematical quantities on the communication matrices, which can be used to infer some properties of the theory.  In Secs. \ref{sec:sdi-identification} and \ref{sec:additional} we discuss in more detail these quantities and how these quantities can be used for identifying channels or some properties of the channels.

\section{Channel differentiation and tomography}\label{sec:diff-tomography}

Here we discuss the necessary and sufficient conditions on the states and the measurement used in the prepare-and-measure scenario depicted in Fig. \ref{figure2} to differentiate between any two quantum channels, and to perform complete channel tomography, to show that the two tasks are not synonymous. We further investigate both tasks in the case where the channel is already known to be unital.

\subsection{Differentiability of channels}\label{subsec:differentiation}

Let us first focus on differentiating channels. We start by a formal definition.
\begin{definition}
    A prepare-and-measure scenario consisting of states $\{\varrho_j\}_{j\in[m]}\subset\mathcal{S}(\mathcal{H})$ and a measurement $M \in \mathcal{M}_n(\hik)$ can \emph{differentiate} between two quantum channels $\Phi_1, \Phi_2: \lh \to \lhik$ if and only if $C_{\Phi_1(\varrho),M} = C_{\Phi_2(\varrho),M}$ implies that $\Phi_1 = \Phi_2$.
\end{definition}

For a set of states $\{\varrho_j\}_{j\in[m]}\subset\mathcal{S}(\mathcal{H})$ and a measurement $M \in \mathcal{M}_n(\hik)$ we denote $V_\varrho= \rspan{\{\varrho\}_{j\in [m]}} \subseteq \lsh$ and $V_M = \rspan{\{M_k\}_{k \in [n]}} \subseteq \lshik$. It is no surprise that differentiation of any two channels between $\hi $ and $\hik$ is possible only when the states and the measurement are \emph{informationally complete}, i.e., when $V_\varrho = \lsh$ and $V_M = \lshik$. While it is an elementary observation that informational completeness of the states and the measurement is sufficient condition for differentiating between any two channels, our proof presents a construction where for any informationally incomplete set of states or measurement we can define two channels which cannot be differentiated.

Before going to the formal statement, let us recall an elementary property of quantum state spaces discussed in \cite{Bengtsson_Zyczkowski_2006}. We give a proof for completeness and to suit our chosen (scaling of the) parametrization for quantum states.

\begin{lemma}\label{lemma1}
        The radius of the biggest $d^2-1$ sphere that is completely contained within the set of Bloch vectors of all $d$-dimensional quantum states is $\frac{1}{\sqrt{d-1}}$.
\end{lemma}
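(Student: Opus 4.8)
The plan is to reduce the statement to a short eigenvalue optimisation over traceless self-adjoint operators. Write a generic state in the fixed Bloch parametrisation as $\varrho = \tfrac1d\id + \sum_{i=1}^{d^2-1} r_iG_i$, where $\{G_i\}$ is the chosen self-adjoint basis of traceless operators; recall that in this scaling $\tr{\varrho^2} = \tfrac1d\bigl(1+\no{\vec r}^2\bigr)$, so pure states lie exactly on the sphere $\no{\vec r}=\sqrt{d-1}$, and membership in the state space is equivalent to $\tfrac1d\id+\sum_i r_iG_i\ge 0$ (the trace is automatically $1$). First I would argue that the largest inscribed sphere may be taken centred at the maximally mixed state: conjugation by a unitary $U$ acts on Bloch vectors as an orthogonal map $O_U$ preserving the state space, with $\vec 0$ as the only common fixed point (the only traceless operator commuting with all unitaries is $0$); hence if the ball $B(\vec c,R)$ of radius $R$ about $\vec c$ lies in the compact convex state space, then so does the Haar average $B\bigl(\int O_U\vec c\,dU,R\bigr)=B(\vec 0,R)$.

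Next I would translate containment of $B(\vec 0,R)$ into an eigenvalue bound: $B(\vec 0,R)$ lies in the state space iff $\tfrac1d\id+H\ge 0$ for every traceless self-adjoint $H=\sum_i r_iG_i$ with $\no{\vec r}\le R$, i.e. with $\tr{H^2}\le R^2/d$ — equivalently, iff $-\lambda_{\min}(H)\le\tfrac1d$ for all such $H$. So everything reduces to computing, for fixed $s>0$, the quantity $\mu(s):=\sup\{-\lambda_{\min}(H): H=H^*,\ \tr H=0,\ \tr{H^2}=s\}$.

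To evaluate $\mu(s)$ I would fix such an $H$ and set $\mu=-\lambda_{\min}(H)>0$. Then $A:=H+\mu\id$ is positive semidefinite with a zero eigenvalue, hence has at most $d-1$ nonzero eigenvalues, and $\tr A=d\mu$, $\tr{A^2}=\tr{H^2}+d\mu^2=s+d\mu^2$. Cauchy--Schwarz applied to the nonzero eigenvalues of $A$ gives $(d\mu)^2=(\tr A)^2\le(d-1)\tr{A^2}=(d-1)(s+d\mu^2)$, which rearranges to $\mu^2\le\tfrac{(d-1)s}{d}$; equality is attained by the $H$ with spectrum $\{-\mu\}\cup\{\tfrac{\mu}{d-1},\dots,\tfrac{\mu}{d-1}\}$, i.e. $H\propto\id-d\ket{\psi}\bra{\psi}$, which points opposite to the Bloch vector of a pure state. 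Hence $\mu(s)=\sqrt{(d-1)s/d}$. Putting $s=R^2/d$, the condition $\mu(R^2/d)=\tfrac{R\sqrt{d-1}}{d}\le\tfrac1d$ becomes $R\le\tfrac{1}{\sqrt{d-1}}$, so the in-radius is exactly $\tfrac{1}{\sqrt{d-1}}$; the extremiser also exhibits tightness directly, since for a unit pure-state direction $\hat p$ the operator $\tfrac1d\id-\tfrac{1}{\sqrt{d-1}}\,\hat p\cdot\vec G$ is a state whose smallest eigenvalue is $0$, so no larger ball fits.

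The one step with genuine content is the evaluation of $\mu(s)$, i.e. the maximisation of $-\lambda_{\min}$ under the two trace constraints; once it is reformulated through $A=H+\mu\id$ it collapses to a one-line Cauchy--Schwarz estimate, the only care needed being the observation that $A$ has rank at most $d-1$. A minor point that is easy to miss is the centring argument in the first paragraph; if one only wishes to record the in-radius about the maximally mixed state, it can be omitted.
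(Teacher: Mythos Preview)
Your proof is correct and follows essentially the same route as the paper: both reduce positivity of $\tfrac1d\id+H$ to a bound on $|\lambda_{\min}(H)|$ and obtain that bound by Cauchy--Schwarz on the remaining $d-1$ eigenvalues (your substitution $A=H+\mu\id$ is just a repackaging of the paper's direct computation on $\mu_2,\dots,\mu_d$, with the same equality case). Your Haar-averaging step to justify centring at the origin is a pleasant addition the paper tacitly omits, but the core argument is identical.
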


\begin{proof}
        A quantum state is a positive-semidefinite unit trace operator acting on a Hilbert space $\mathcal{H}$. It can be written in the form $\varrho = \frac{1}{d}(\id_d + \vec{r} \cdot\vec{\sigma})$, where $||\vec{r}|| =: R\leq\sqrt{d-1}$. However, not all operators of this form are positive-semidefinite. Denote $\vec{r} = R\hat{r}$, where $R\geq0$ and $\hat{r}$ is the unit vector along $\vec{r}$. An operator of the form $\varrho = \frac{1}{d}(\id_d + \vec{r} \cdot\vec{\sigma})$ is positive-semidefinite if and only if it has only non-negative eigenvalues. That is,
        \begin{equation}
        \begin{split}
            1 + R\mu_i &\geq 0 \qquad\forall\;i\\
            \implies R\mu_i &\geq -1 \quad\forall\;i,\label{radineq0}
        \end{split}
        \end{equation}
        where $\mu_i$ are the eigenvalues of $\hat{r}\cdot\vec{\sigma}$. Eq.\eqref{radineq0} holds for all $i \in [d]$ if it holds for the smallest eigenvalue, $\mu_{\min}$. Since $\tr{\vec{r}\cdot\vec{\sigma}} = R\tr{\hat{r}\cdot\vec{\sigma}} = R\sum_{i=1}^{d}\mu_i = 0$, the smallest eigenvalue of $\hat{r}\cdot\vec{\sigma}$ is negative $(\mu_{\min}<0)$. Therefore, any operator of the form $\varrho = \frac{1}{d}(\id_d + R \hat{r} \cdot\vec{\sigma})$ is positive-semidefinite if and only if
        \begin{equation}\label{radineq}
            R|\mu_{\min}| \leq 1 \, .
        \end{equation}
        Since $\hat{r}$ is a unit vector, and since $\tr{\sigma_j}=0$ and $\tr{\sigma_j^2}=d$ for all $j\in[d^2-1]$, we have $\tr{\hat{r}\cdot\vec{\sigma}} = 0$ and $\tr{(\hat{r}\cdot\vec{\sigma})^2} = d$. In terms of the eigenvalues $\{\mu_i\}_{i\in[d]}$, this becomes
        \begin{equation}
            \sum_{i=1}^d\mu_i = 0,\quad\sum_{i=1}^d\mu^2_i = d.
        \end{equation}
        Without loss of generality, consider $\mu_{\min}=\mu_1=-a$, where $a>0$. Which gives us
        \begin{equation}
            \sum_{i=2}^d\mu_i = a,\quad\sum_{i=2}^d\mu^2_i = d - a^2.
        \end{equation}
        Using the Cauchy-Schwarz inequality, we get
        \begin{align}
            a^2 = \left(\sum_{i=2}^d\mu_i\right)^2 &\leq (d-1)\left(\sum_{i=2}^d\mu^2_i\right) = (d-1)(d-a^2),\\
            \implies a^2 &\leq d(d-1) - a^2(d-1),\\
            \implies da^2 &\leq d(d-1),\\
            \implies a &\leq \sqrt{d-1}.\label{csineq}
        \end{align}
        Therefore, $|\mu_{\min}| \leq \sqrt{d-1}$ in all directions $\hat{r}$. Therefore, for all $R\leq\frac{1}{\sqrt{d-1}}$, Eq. \eqref{radineq} is satisfied and $\varrho = \frac{1}{d}(\id_d + \vec{r} \cdot\vec{\sigma})$ is a quantum state.
        
        On the other hand, the equality in Eq. \eqref{csineq} is achieved only when $\mu_{\min}=-\sqrt{d-1}$ and $\mu_{i} = \frac{1}{\sqrt{d-1}}$ for all $i\in\{2,\dots,d\}$. Define $H = \text{diag}(\mu_i), i\in[d]$. It is clear that $\tr{H} = \sum_{i=1}^d\mu_i = 0$ and $\tr{H^2} = \sum_{i=1}^d\mu_i^2 = d$. Therefore, $H = \hat{s}\cdot\vec{\sigma}$ for some unit vector $\hat{s}$. This means that for any $R>\frac{1}{\sqrt{d-1}}$, we have that $R|\mu_{\min}| = R(\sqrt{d-1}) > 1$, which means that the operator $\varrho = \frac{1}{d}(\id_d + R\hat{s} \cdot\vec{\sigma})$ is not positive semidefinite, and thus not a quantum state. This completes the proof of the lemma.
\end{proof}

With this, we move on to the formal statement regarding differentiability of quantum channels.

\begin{proposition}\label{prop:differentiability}
    A communication matrix $C_{\varrho,M}$ with an implementation with states $\{\varrho_j\}_{j\in[m]}$ $\subset\mathcal{S}(\mathcal{H})$ and a measurement $M \in \mathcal{M}_n(\hik)$ can differentiate between any two quantum channels $\Phi_1,\Phi_2:\mathcal{L}(\mathcal{H})\to\mathcal{L}(\mathcal{K})$ if and only if both the states and the measurement are informationally complete.
\end{proposition}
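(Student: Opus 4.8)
My plan is to prove the two implications separately. The direction ``informational completeness $\Rightarrow$ differentiation'' is a short linear-algebra argument; the substance is the converse, for which I would exhibit, whenever the scenario is \emph{not} informationally complete, an explicit pair of distinct channels sharing a communication matrix.

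\emph{Sufficiency.} Assume $V_\varrho=\lsh$ and $V_M=\lshik$, and let $\Phi_1,\Phi_2:\lh\to\lhik$ induce the same communication matrix, i.e. $\tr{\Phi_1(\varrho_j)M_k}=\tr{\Phi_2(\varrho_j)M_k}$ for all $j,k$. Set $\Delta=\Phi_1-\Phi_2$; it is linear and Hermiticity-preserving (a difference of positive maps), so each $\Delta(\varrho_j)$ is self-adjoint, and the hypothesis reads $\tr{\Delta(\varrho_j)M_k}=0$ for all $k$. Since $\{M_k\}_k$ spans $\lshik$, this forces $\tr{\Delta(\varrho_j)Y}=0$ for every self-adjoint $Y$ on $\hik$, hence $\Delta(\varrho_j)=0$ for every $j$; since $\{\varrho_j\}_j$ spans $\lsh$ and $\Delta$ is linear, $\Delta$ vanishes on $\lsh$, and then on all of $\lh$ by writing operators as $A+iB$ with $A,B$ self-adjoint. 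Thus $\Phi_1=\Phi_2$.

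\emph{Necessity (contrapositive).} Assume the scenario is not informationally complete. If $V_M\neq\lshik$ I pick a nonzero self-adjoint $G$ with $\tr{GM_k}=0$ for all $k$; such a $G$ is automatically traceless, because $\id_{d_o}=\sum_kM_k\in V_M$. If instead $V_\varrho\neq\lsh$ I pick a nonzero self-adjoint $H$ with $\tr{H\varrho_j}=0$ for all $j$. Let $\Phi_1$ be the completely depolarizing channel $X\mapsto\tfrac{1}{d_o}\tr{X}\id_{d_o}$, and define $\Phi_2=\Phi_1+\varepsilon\Lambda$, where $\Lambda(X)=\tr{X}G$ in the first case and $\Lambda(X)=\tr{HX}\Xi$ in the second, for a fixed nonzero traceless self-adjoint $\Xi$ on $\hik$ (which exists since $d_o\geq2$). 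I would then verify: (i) $\Phi_2$ is trace-preserving, because $G$ and $\Xi$ are traceless; (ii) $\Phi_2$ is completely positive for small $\varepsilon>0$ --- in the first case $\Phi_2$ is the channel completely depolarizing to $\tfrac{1}{d_o}\id_{d_o}+\varepsilon G$, which is a density operator for $\varepsilon$ small by Lemma~\ref{lemma1}, and in the second case the Choi operator of $\Phi_2$ is a small Hermitian perturbation of the strictly positive Choi operator $\tfrac{1}{d_o}\id_{d_id_o}$ of $\Phi_1$; (iii) $\Phi_2\neq\Phi_1$, since $G\neq0$ respectively $H\neq0$ and $\Xi\neq0$; and (iv) the two channels induce the same communication matrix, since $\varepsilon\tr{\Lambda(\varrho_j)M_k}$ equals $\varepsilon\tr{GM_k}=0$ in the first case and $0$ in the second (there $\Lambda(\varrho_j)=\tr{H\varrho_j}\Xi=0$). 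Hence $C_{\Phi_1(\varrho),M}=C_{\Phi_2(\varrho),M}$ with $\Phi_1\neq\Phi_2$, so the scenario cannot differentiate all channels.

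The one genuinely delicate point is this necessity step: the perturbed maps must be verified to be \emph{completely} positive and trace-preserving --- positivity alone is not enough --- while being distinct from $\Phi_1$. The trick that makes this painless is to perturb a channel sitting in the \emph{interior} of the CPTP cone, the completely depolarizing channel, whose Choi operator $\tfrac{1}{d_o}\id_{d_id_o}$ is strictly positive, so that sufficiently small Hermitian perturbations of the Choi operator remain positive semidefinite; Lemma~\ref{lemma1} supplies the admissible perturbation size in the measurement-incomplete case. Everything else is routine. (The statement tacitly presumes $d_i,d_o\geq2$; for $d_o=1$ there is a unique channel and nothing to differentiate.)
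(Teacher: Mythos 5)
Your proof is correct, and while the sufficiency direction coincides with the paper's argument, your necessity direction takes a genuinely different route. The paper proves necessity by explicitly constructing two measure-and-prepare channels that agree on the set-up: from an operator orthogonal to the states (or to the effects) it builds a dichotomic observable $N_\pm$ along the invisible Bloch direction and compares channels of the form $X\mapsto \tr{N_+X}\xi_1+\tr{N_-X}\xi_2$ with a constant (or depolarizing-type) channel, invoking Lemma \ref{lemma1} to keep all the ingredients positive. You instead perturb the completely depolarizing channel by $\varepsilon\Lambda$ along a direction invisible to the statistics and certify complete positivity from the strict positivity of its Choi operator $\tfrac{1}{d_o}\id_{d_id_o}$; this interior-point argument is shorter and more transparent, since CP follows from a norm bound on $\varepsilon$ and Lemma \ref{lemma1} is needed only in the measurement-incomplete case (and even there could be replaced by the same Choi argument), and it generalizes readily to any situation where a channel lies in the interior of the CPTP set. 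What the paper's longer construction buys is that its counterexample channels are in measure-and-prepare (entanglement-breaking) form, and the Case 2 pair is unital when $\hik=\hi$; both features are explicitly reused later (the proof of Prop. \ref{prop:unitaldiff} cites the unitality of the Case 2 channels, and Sec. \ref{sec:additional} uses the measure-and-prepare form to conclude that informational completeness cannot be relaxed even when differentiating only entanglement-breaking channels). Your perturbed channels do not immediately supply these extras: the constant channel $X\mapsto\tr{X}\bigl(\tfrac{1}{d_o}\id_{d_o}+\varepsilon G\bigr)$ is not unital, and showing your Choi-perturbed map is entanglement-breaking would need an additional argument such as a separable-ball bound. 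Your remark on the degenerate case $d_o=1$ is a fair observation that applies equally to the paper's construction.
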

\begin{proof}
    $(\impliedby)$ Let us start with a set of states $\{\varrho_j\}_{j \in [m]}$ and a measurement $M \in \mathcal{M}_n(\hik)$ with effects $\{M_k\}_{k \in [n]}$ that span $\mathcal{L}_s(\mathcal{H})$ and $\mathcal{L}_s(\mathcal{K})$ over the field of real numbers, respectively. This also means that the states $\{\varrho_j\}_{j \in [m]}$ and the effects $\{M_k\}_{k \in [n]}$ span $\mathcal{L}(\mathcal{H})$ and $\mathcal{L}(\mathcal{K})$ over the field of complex numbers, respectively. Now, let us consider two quantum channels $\Phi_1,\Phi_2:\mathcal{L}(\mathcal{H})\to\mathcal{L}(\mathcal{K})$. Suppose that $C_{\Phi_1(\varrho),M} = C_{\Phi_2(\varrho),M}$. This means that $\tr{\Phi_1(\varrho_j)M_k} = \tr{\Phi_2(\varrho_j)M_k}$ for all $j \in [m]$ and $k \in [n]$. Since the effects $\{M_k\}_{k \in [n]}$ span the whole $\lhik$, we must have that $\Phi_1(\varrho_j) = \Phi_2(\varrho_j)$ for all $j \in [m]$.
    Since $\{\varrho_j\}_{j\in [n]}$ spans $\mathcal{L}(\mathcal{H})$ we must have that $\Phi_1 = \Phi_2$. Thus, the set of states $\{\varrho_j\}_j$ and the measurement $M$ can differentiate between any two channels from $\hi$ to $\hik$.

    $(\implies)$
    Denote $\dim(\mathcal{H})=d_{i}$ and $\dim(\mathcal{K})=d_{o}$ and the identity operators acting on $\mathcal{H}$ and $\mathcal{K}$ as $\id_{d_i}$ and $\id_{d_o}$, respectively.
    
    \textit{Case 1:} Suppose that the set of states $\{\varrho_j\}_{j \in [m]}$ does not span $\mathcal{L}_s(\mathcal{H})$. That is, $\dim(V_\varrho)\leq d_{i}^2-1$. This means that there exists $A\in\mathcal{L}_s(\mathcal{H})$ such that $\tr{A\varrho_j} = 0$ for all $j \in [m]$. Since $A = A^*$, we can write
    \begin{equation}
        A = \alpha \id_{d_i} + A_0
    \end{equation}
    where $\alpha = \frac{\tr{A}}{d_{i}}$ is real and $A_0$ is a traceless self-adjoint operator. We can normalize $A_0$ and define $B_1 = \frac{A_0}{\beta}$, where $\beta = \sqrt{\frac{\tr{A_0^2}}{d_{i}}}$ is real and positive. $B_1$ is a traceless self-adjoint operator such that $\tr{B_1^2} = d_{i} = \tr{\sigma_1^2}$, where $\sigma_1$ is one of the $d$-dimensional Bloch basis elements for $\mathcal{L}_s(\mathcal{H})$. Thus, there exists a unitary operator $U$ such that $B_1 = U\sigma_1U^*$. We use the same unitary on all $\sigma_u$ to create a new set of traceless self-adjoint operators $\{B_u = U\sigma_uU^*\}_{u\in[d_{i}^2-1]}$ that, along with the identity operator, form an orthogonal operator basis ($\tr{B_uB_v} = \delta_{uv}d_{i}$). Any quantum state can now be written in terms of the new operator basis as
    \begin{equation}
        \varrho = \frac{1}{d_{i}}\left(\id_{d_i} + \sum_{u=1}^{d_{i}^2-1} s_uB_u\right),
    \end{equation}
    which includes all the $\varrho_j$. The trace between A and $\varrho_j$ for all $j \in [m]$ then gives us the following:
    \begin{equation}
        \begin{split}
            0&=\tr{A\varrho_j} =\tr{(\alpha \id_{d_i} + \beta B_1)\varrho_j} = \alpha \tr{\varrho_j}+ \beta\tr{B_1\varrho_j} \\
            &= \alpha + \beta\tr{\frac{B_1}{d_{i}}\left(\id_{d_i} + \sum_{u=1}^{d^2-1} s_u^{(j)}B_u\right)} =\alpha + \beta s_1^{(j)}, 
        \end{split}
    \end{equation}
    which shows that $s_1^{(j)} = -\frac{\alpha}{\beta}$ for all $j \in [m]$. This means that all $\varrho_j$ have the same component along $B_1$. Define two operators $N_\pm = \frac{1}{2}(\id_{d_i} \pm tB_1)$, where $|t|\leq\frac{1}{\sqrt{d_{i}-1}}$. Clearly, $N_\pm\in\mathcal{L}_s(\mathcal{H})$, and for any operator $\varrho = \frac{1}{d}(\id_{d_i} + \sum_u s_u B_u)$ that resides in the $d_{i}^2-1$ dimensional Bloch ball, so that $||s|| \leq \sqrt{d_i-1}$, we have that
    \begin{equation}
        \tr{N_\pm\varrho} = \frac{1}{2}(1 \pm ts_1).
    \end{equation}
    Since $|t|\leq\frac{1}{\sqrt{d_{i}-1}}$ and $|s_1|\leq\sqrt{d_{i}-1}$ for any vector inside the $d_{i}^2-1$ dimensional Bloch ball, we have $|ts_1|\leq1$, that is, $\tr{N_\pm\varrho} \geq 0$. Since all density matrices reside in the $d_{i}^2-1$ dimensional Bloch ball and since states form a base for the (self dual) cone of positive semidefinite operators on $\hi$, we have that $N_\pm\geq0$. Additionally, $N_+ + N_- = \id_{d_i}$, which makes $N$ consisting of effects $\{N_\pm\}$ a dichotomic measurement on $\mathcal{S}(\mathcal{H})$. Denote $p = \tr{N_+\varrho_j} = \frac{1}{2}(1 - \frac{t\alpha}{\beta})$ and $(1-p) = \tr{N_-\varrho_j} = \frac{1}{2}(1 + \frac{t\alpha}{\beta})$, which are the same for all $j \in [m]$. Now, fix two states $\xi_1,\xi_2\in\mathcal{S}(\mathcal{K})$, and define two channels $\Phi_1,\Phi_2:\mathcal{L}(\mathcal{H})\to\mathcal{L}(\mathcal{K})$ as follows: 
    \begin{align}
        \Phi_1(X) &= \tr{X} (p\xi_1 + (1-p)\xi_2),\\
        \Phi_2(X) &= \tr{N_+X}\xi_1 + \tr{N_-X}\xi_2,
    \end{align}
    for all $X \in \lh$. It can be easily seen that $\Phi_1(\varrho_j) = \Phi_2(\varrho_j)$ for all $j \in [m]$, even though $\Phi_1\neq\Phi_2$. Thus, $\Phi_1$ and $\Phi_2$ cannot be differentiated using $\{\varrho_j\}$ and any measurement. Thus, in order to differentiate between any two channels, the states need to be informationally complete.
    
    \textit{Case 2:} Suppose that the measurement $M$ is not informationally complete. That is, there exists $A_0\in\mathcal{L}_s(\mathcal{K})$ such that $\tr{A_0M_k}=0$ for all $k \in [n]$. This means that $\tr{A_0} = \tr{A_0(\sum_kM_k)}=0$, that is, $A_0$ is a traceless, self-adjoint operator. Similar to \textit{Case 1}, we can normalize $B_1$, that is, $B_1=\frac{A_0}{\beta}$, where $\beta = \sqrt{\frac{\tr{A_0^2}}{d_{o}}}$ is real and positive. This gives us $\tr{B_1M_k} = 0$ for all $k \in [n]$ and $\tr{B_1^2}=d_{o}=\tr{\sigma_1^2}$, where $\sigma_1$ is one of the $d$-dimensional Bloch basis elements for $\mathcal{L}_s(\mathcal{K})$, implying the existence of a unitary such that $B_1 = U\sigma_1U^*$. We can define a new Bloch basis similar to how we did in \textit{Case 1} as $\{B_u = U\sigma_uU^*\}_{u\in[d_{o}^2-1]}$ and the identity operator $\id_{d_o}$. Any state $\varrho\in\mathcal{S}(\mathcal{K})$ that can be written as $\varrho = \frac{1}{d_{o}}(\id_{d_o} + r_1B_1)$ gives us,

    \begin{equation}
    \begin{split}
        \tr{\varrho M_k} &= \frac{1}{d_{o}}\tr{\left(\id_{d_o}+r_1B_1\right)M_k}\\
        &= \frac{\tr{M_k}}{d_{o}}\qquad\forall\;k \in [n]\, .
    \end{split}        
    \end{equation}

    Let us now define two channels $\Phi_1,\Phi_2:\mathcal{L}(\mathcal{H})\to\mathcal{L}(\mathcal{K})$ as follows:
    \begin{align}
        \Phi_1(X) &= \tr{X}\frac{\id_{d_o}}{d_{o}},\\
        \Phi_2(X) &= \tr{N_+X}\zeta_+ + \tr{N_-X}\zeta_-
    \end{align}
    for all $X \in \lh$, where $N$ with effects $\{N_\pm\}$ is the same measurement as defined in \textit{Case 1} and $\zeta_\pm = \frac{1}{d_{o}}(\id_{d_o} \pm r_1B_1)$, where $|r_1|\leq\frac{1}{\sqrt{d_{o}-1}}$, meaning that $\zeta_\pm\in\mathcal{S}(\mathcal{K})$ according to Lemma \ref{lemma1}. It can be seen that for all $k \in [n]$ and $X \in \lh$, we have the following:

    \begin{equation}
    \begin{split}
        \tr{\Phi_2(X)M_k} &= \tr{\left(\tr{N_+X}\zeta_+ + \tr{N_-X}\zeta_-\right)M_k}\\
        &= \frac{1}{d_{o}}\tr{\left(\tr{X}\id_{d_o} + \tr{(N_+ - N_-)X}r_1B_1\right)M_k}\\
        &= \frac{\tr{X}\tr{M_k}}{d_{o}}\\
        &= \tr{\Phi_1(X)M_k},
    \end{split}  
    \end{equation}
    which is also true for all $X\in\{\varrho_j\}_{j \in [m]}$. That is, $C_{\Phi_1(\varrho),M} = C_{\Phi_2(\varrho),M}$, which means that $\Phi_1$ and $\Phi_2$ cannot be differentiated using $M$. Thus, any two quantum channels $\Phi_1,\Phi_2:\mathcal{L}(\mathcal{H})\to\mathcal{L}(\mathcal{K})$ can be differentiated only if $M$ is informationally complete. This completes the proof.
\end{proof}

\subsection{Channel tomography}\label{subsec:tomography}

Next we look into how to perform channel tomography in the prepare-and-measure scenario so that the channel can be reconstructed from the elements of the communication matrix. In order to reconstruct a channel from the prepare-and-measure statistics one must be able to solve all the channel parameters related to some parametrization. In the following we will use the Bloch parametrization of a channel obtained from some Bloch basis of operators (see e.g. \cite{heinosaari2011mathematical}). One could equally work for example with the $\chi$-matrix representation obtained from (some) Kraus operators of the channel and of course one can always go from one parametrization to the other. 

Let us consider a straightforward way to obtain the Bloch parameters of a channel by using a fully known prepare-and-measure set-up which can differentiate between any two channels. Suppose we have a communication matrix $C'\in\mathcal{C}_{m,n}(\mathcal{H},\mathcal{K})$. By definition it can be implemented with some set of states $\{\varrho_j\}_{j\in[m]}\subset\mathcal{S}(\mathcal{H})$, some channel $\Phi: \lh \to \lhik$ and some measurement $M \in \mathcal{M}_n(\hik)$. Suppose that we know the states and the measurement and also that we can use the prepare-and-measure set-up to differentiate between any two channels. By Prop. \ref{prop:differentiability} the states $\{\varrho_j\}_{j\in[m]}$ and the measurement $M \in \mathcal{M}_n(\hik)$ are then informationally complete.

Let $\{\sigma_a\}_{a\in[d_{i}]}$ (respectively, $\{\sigma'_b\}_{b\in[d_{o}]}$) be the set of traceless self-adjoint operators that, along with $\sigma_0=\id_{d_i}$ (respectively, $\sigma'_0=\id_{d_o}$), constitute the generalized Bloch basis for $\mathcal{L}_s(\mathcal{H})$ (respectively, $\mathcal{L}_s(\mathcal{K})$). That is, $\tr{\sigma_a\sigma_{a'}}=d_{i}\delta_{aa'}$ and $\tr{\sigma'_b\sigma'_{b'}}=d_{o}\delta_{bb'}$ for all $a,a'\in[d_{i}^2-1]\cup\{0\}$ and $b,b'\in[d_{o}^2-1]\cup\{0\}$. From the informational completeness of the states and the measurement we get that
    \begin{align}
        \sigma_a &= \sum_{j\in[m]}\alpha^{(j)}_a\varrho_j \label{basis1}\\
        \sigma'_b &= \sum_{k\in[n]}\beta^{(k)}_bM_k,\label{basis2}
    \end{align}
for some $\alpha^{(j)}_a,\beta^{(k)}_b\in\real$ for all $a \in [d_i]\cup \{0\}$ and $b \in [d_o]\cup \{0\}$ where the values of these coefficients can be solved if the states and the measurement are known. Now, recall that the parameters of the quantum channel $\Phi:\mathcal{L}(\mathcal{H})\to\mathcal{L}(\mathcal{K})$ in the Bloch representation are given by 
    \begin{equation}\label{parameters}
        \Phi_{ba} = \frac{1}{d_{o}}\tr{\Phi(\sigma_a)\sigma'_b}
    \end{equation}
    for all for all $a \in [d_i]\cup \{0\}$ and $b \in [d_o]\cup \{0\}$ so that for any $X = \sum_{c =0}^{d_i} \gamma_c \sigma_c  = \sum_{c =0}^{d_i} \frac{1}{d_i}  \tr{X \sigma_c} \sigma_c \in \lh$ we have that
    \begin{equation}
        \Phi(X) = \sum_{a=0}^{d_i} \sum_{b=0}^{d_o} \Phi_{ba} \gamma_a \sigma'_b = \frac{1}{d_i} \sum_{a=0}^{d_i} \sum_{b=0}^{d_o} \Phi_{ba}   \tr{X \sigma_a} \sigma'_b \, .
    \end{equation}
    We note that $\Phi_{0a}=\frac{d_{i}}{d_{o}}\delta_{0a}$ are fixed for all channels from $\mathcal{H}$ to $\mathcal{K}$. The remaining $d_{o}^2(d_{i}^2-1)$ parameters (when $b\neq0$) completely characterize the channel $\Phi$. Substituting Eqs.\eqref{basis1} and \eqref{basis2} in Eq.\eqref{parameters}, we get
    \begin{equation}
    \begin{split}
        \Phi_{ba} &= \frac{1}{d_{o}}\tr{\Phi(\sigma_a)\sigma'_b},\\
        &= \frac{1}{d_{o}}\tr{\Phi\left(\sum_{j\in[m]}\alpha^{(j)}_a\varrho_j\right)\sum_{k\in[n]}\beta^{(k)}_bM_k},\\
        &= \frac{1}{d_{o}}\left(\sum_{j\in[m]}\sum_{k\in[n]}\alpha^{(j)}_a\beta^{(k)}_b\tr{\Phi(\varrho_j)M_k}\right),\\
        &= \frac{1}{d_{o}}\left(\sum_{j\in[m]}\sum_{k\in[n]}\alpha^{(j)}_a\beta^{(k)}_bC'_{jk}\right), \label{eq:ch-tomo}
    \end{split}
    \end{equation}
    where $C'_{jk} = \tr{\Phi(\varrho_j)M_k}$ are the elements of the communication matrix $C'\in\mathcal{C}_{m,n}(\mathcal{H},\mathcal{K})$. Thus, all the parameters of the channel can be retrieved, which completely characterizes the channel.

What we learn from above is that if we have a fully determined prepare-and-measure scenario consisting of states $\{\varrho\}_i$ and a measurement $M$, which can differentiate between any two channels, then given any channel $\Phi$, we can use the communication matrix $C_{\Phi(\varrho),M}$ to obtain all the channel parameters, and thus completely characterizing the channel. From Eq. \eqref{eq:ch-tomo} we see that in order for the described procedure to work, we indeed need to be able to differentiate between any two channels: if two channels give the same communication matrix, then Eq. \eqref{eq:ch-tomo} gives us the same parameters for the channels. Also, in order to be able to actually calculate the channel parameters, we need to know the coefficients $\alpha^{(j)}_a$ and $\beta^{(k)}_b$, i.e., the description of the states and the measurement must be fully known to us. This highlights the destinction between tomography and differentiability: we need more information about the set-up in order to perform tomography.

While knowing the full description of the set-up in order to solve the channel parameters for all possible channels is sufficient, it is not necessary that we have complete information regarding the states and effects if we just want to perform channel tomography for some channels. This can be understood using a simple example.

\begin{example}\label{ex1}
    Let us consider an informationally complete set of states $\{\varrho_j\}_{j\in[m]}\subset\mathcal{S}(\mathcal{H})$ and measurement $M \in \mathcal{M}_n(\hik)$. Suppose that the communication matrix $C' = (C'_{jk})_{j,k}$ obtained from $\{\varrho_j\}_j$ and $M$ in the presence of a channel $\Phi:\mathcal{L}(\mathcal{H})\to\mathcal{L}(\mathcal{K})$ is such that all rows of $C'$ are the same. That is, for every $k\in[n]$, we have $C'_{jk} = \tr{\Phi(\varrho_j)M_k} = \tr{\Phi(\varrho_{j'})M_k} = C'_{j'k}$ for all $j,j'\in[m]$. Since $M$ is informationally complete, this means that $\Phi(\varrho_j) = \Phi(\varrho_{j'}) = \xi\in\mathcal{S}(\mathcal{K})$ for all $j,j'\in[m]$, and since $\rspan{\{\varrho_j\}_j} = \mathcal{L}_s(\mathcal{H})$, this means that $\Phi$ is the single point contraction $\Phi(X) = \tr{X}\xi$ for all $X \in \lh$. It is not necessary to know the exact states $\{\varrho_j\}_j$ in this case to characterize the channel. Instead, it is sufficient that the measurement is known so that we can completely characterize $\xi$, thus characterizing the channel. Furthermore, if $\xi = \frac{\id_{d_o}}{d_o}$, it is not necessary to know exactly the measurement either. Instead, it is sufficient to know the trace of each effect $\tr{M_k}$ to characterize the channel (the completely depolarizing channel), since the elements of the matrix would be given exactly by $C'_{jk}=\frac{\tr{M_k}}{d_o}$ for all $j\in[m],k\in[n]$.
\end{example}

\section{Semi-device-independent channel identification}\label{sec:sdi-identification}

In the previous section we have seen that a prepare-and-measure set-up can be used for channel tomography for any channel if and only if we can differentiate between any two channels and the full description of the set-up is known to us. On the other hand, differentation between any two channels is possible if and only if the set-up is informationally complete. Thus, in the case of channel tomography, full knowledge of the set-up is needed (from which informational completeness of the set-up can be checked) while for differentiation we can get away with only the knowledge about the informational completeness. The question know is can we somehow certify the information completeness of the set-up or even the full description of the set-up directly from the communication matrix, i.e., the statistics of the prepare-and-measure scenario. 

Thus, for our purposes let us consider a scenario where we have a prepare-and-measure set-up which we know is implemented with some $d$-dimensional quantum states and measurements. Suppose also that we can input any $d$-dimensional quantum channel into the set-up. However, suppose that we have restricted access to the set-up so that we only have access to the measurement statistics, i.e., the communication matrices before and after using the channels.

Let us start with the question about the informational completeness of the set-up. We will need some additional result about communication matrices. In \cite{HeinosaariKerppoLeppäjärvi2020} it was shown that the \emph{rank} of a communication matrix relates to the linear dimension of the theory in which it can be implemented in. In particular, in quantum theory for any communication matrix $C \in \mathcal{C}(\hi)$ on a $d$-dimensional Hilbert space $\hi$ we have that $\rank{C}\leq  \dim(\lsh)= d^2$. We can refine this connection further. For a communication matrix $C \in \mathcal{C}_{m,n}(\hi)$, let us denote the $k$th column vector and the $j$th row vector of $C$ by $\vec{C}_k$ and $\vec{C}^j$, respectively. Then $\rank{C} = \dim\left(\rspan{\{\vec{C}_k\}_{k=1}^n}\right)= \dim\left(\rspan{\{\vec{C}^j\}_{j=1}^m}\right)$. For a set of $m$ states $\{\varrho_j\}_{j=1}^m \subset \sh$ and a measurement $M \in \mathcal{M}_n(\hi)$ we denote $V_\varrho= \rspan{\{\varrho_j\}_{j=1}^m} \subseteq \lsh$ and $V_M= \rspan{\{M_k\}_{k=1}^n} \subseteq \lsh$. 

\begin{proposition}\label{prop:rank}
    Let $C \in \mathcal{C}_{m,n}(\hi)$ be a $m \times n$ communication matrix on a $d$-dimensional Hilbert space $\hi$. Then, 
    \begin{equation}
        \dim(V_\varrho \cap V_M )\leq \rank{C} \leq \min \left\lbrace \dim( V_\varrho), \dim( V_M)\right\rbrace
    \end{equation}
    for any implementation of $C = (C_{jk})_{j,k} = (\tr{M_k \varrho_j})_{j,k}$ consisting of an $n$-outcome measurement $M \in \mathcal{M}_n(\hi)$ and $m$ states $\{\varrho_j\}_{j=1}^m \subset \sh$.
\end{proposition}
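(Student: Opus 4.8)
The plan is to express $\rank{C}$ through a single linear map and then let the rank--nullity theorem together with the positive-definiteness of the Hilbert--Schmidt inner product do all the work. Throughout, equip $\lsh$ with the real inner product $\langle A,B\rangle := \tr{AB}$, which is symmetric and positive definite, and let $W^\perp$ denote the orthogonal complement of a subspace $W\subseteq\lsh$ with respect to it. Define the linear map $\Lambda_M:\lsh\to\real^n$ by $\Lambda_M(A)=(\tr{A M_k})_{k=1}^n$. Since the $j$-th row of $C$ is precisely $\Lambda_M(\varrho_j)$ and the states $\{\varrho_j\}_{j=1}^m$ span $V_\varrho$ by definition, the real row space of $C$ equals $\Lambda_M(V_\varrho)$, so that $\rank{C}=\dim\Lambda_M(V_\varrho)$. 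In particular, restricting to a spanning subset does not change the rank, since the remaining rows are real linear combinations of the chosen ones.

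Next I would identify the kernel. We have $\ker\Lambda_M=\{A\in\lsh:\tr{A M_k}=0 \text{ for all }k\}=V_M^\perp$, hence $\ker(\Lambda_M|_{V_\varrho})=V_\varrho\cap V_M^\perp$, and rank--nullity applied to $\Lambda_M|_{V_\varrho}$ gives the key identity
\[
\rank{C}=\dim V_\varrho-\dim(V_\varrho\cap V_M^\perp).
\]
From this the upper bound is immediate: $\rank{C}\le\dim V_\varrho$, and running the same argument with the map $A\mapsto(\tr{\varrho_j A})_j$ (whose image on $V_M$ is the real column space of $C$) gives $\rank{C}\le\dim V_M$, hence $\rank{C}\le\min\{\dim V_\varrho,\dim V_M\}$.

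For the lower bound I must show $\dim(V_\varrho\cap V_M)\le\dim V_\varrho-\dim(V_\varrho\cap V_M^\perp)$, equivalently $\dim(V_\varrho\cap V_M)+\dim(V_\varrho\cap V_M^\perp)\le\dim V_\varrho$. Both $V_\varrho\cap V_M$ and $V_\varrho\cap V_M^\perp$ are subspaces of $V_\varrho$, and their intersection is contained in $V_M\cap V_M^\perp=\{0\}$ because the Hilbert--Schmidt form is positive definite; therefore their dimensions add to $\dim\big((V_\varrho\cap V_M)+(V_\varrho\cap V_M^\perp)\big)\le\dim V_\varrho$, as required, which combined with the key identity yields $\dim(V_\varrho\cap V_M)\le\rank{C}$. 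The only point demanding genuine care is the bookkeeping in the first two steps — correctly recognizing $\rank{C}$ as $\dim\Lambda_M(V_\varrho)$ and locating $\ker\Lambda_M$ as exactly $V_M^\perp$; once the identity $\rank{C}=\dim V_\varrho-\dim(V_\varrho\cap V_M^\perp)$ is in hand, both inequalities are forced, and the positive-definiteness of $\tr{\,\cdot\,\cdot\,}$ is the single nontrivial ingredient in the lower bound.
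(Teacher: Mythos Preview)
Your proof is correct and follows essentially the same strategy as the paper's: apply rank--nullity to a linear map on $V_\varrho$ whose kernel is $V_\varrho\cap V_M^\perp$, and then exploit $V_M\cap V_M^\perp=\{0\}$ for the lower bound. Your choice of the map $\Lambda_M:\lsh\to\real^n$ is slightly more direct than the paper's map $g:V_\varrho\to V_M^*$, since it identifies the row space of $C$ as the image and hence yields the \emph{equality} $\rank{C}=\dim V_\varrho-\dim(V_\varrho\cap V_M^\perp)$ rather than only the inequality the paper derives; and in the final step you correctly use only the inclusion $(V_\varrho\cap V_M)\oplus(V_\varrho\cap V_M^\perp)\subseteq V_\varrho$, whereas the paper asserts equality there (which fails in general, as intersection does not distribute over an orthogonal decomposition), though only the inclusion is actually needed for the bound.
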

\begin{proof}
    Let $M$ and $\{\varrho_j\}_{j=1}^m$ be some $n$-outcome measurement and $m$ states implementing $C$. Let us start with the second inequality. Suppose $d_{\min} := \min \left\lbrace \dim( V_\varrho), \dim( V_M)\right\rbrace = \dim\left( V_\varrho\right)$. Thus, there exists $d_{\min}$ states $\{\varrho_{j_x}\}_{x=1}^{d_{\min}}$ such that $\varrho_j = \sum_{x=1}^{d_{\min}} \alpha^{(j)}_x \varrho_{j_x}$ for some real numbers $\{\alpha^{(j)}_x\}_{x=1}^{d_{\min}} \subset \real$ for all $j \in [m]$. Now
    \begin{equation}
    \begin{split}
        \vec{C}^j &= (\tr{M_1 \varrho_j}, \ldots, \tr{M_n \varrho_j}) \\
        &= \left(\tr{M_1 \sum_{x=1}^{d_{\min}} \alpha^{(j)}_x \varrho_{j_x}}, \ldots, \tr{M_n \sum_{x=1}^{d_{\min}} \alpha^{(j)}_x \varrho_{j_x}}\right) \\
        &= \sum_{x=1}^{d_{\min}} \alpha^{(j)}_x (\tr{M_1 \varrho_{j_x}}, \ldots, \tr{M_n \varrho_{j_x}}) \\
        &= \sum_{x=1}^{d_{\min}} \alpha^{(j)}_x \vec{C}^{j_x}        
    \end{split}
    \end{equation}
    for all $j \in [m]$ so that $\rank{C} \leq d_{\min}$. Similarly, if we suppose that $\dim\left( \rspan{\{M_j\}_{k=1}^n}\right) = d_{\min}$, then we reach the same conclusion by noting that any linear combination of the effects of $M$ results in the linear combination of the columns of $C$. This proves the second inequality.

    In order to prove the first inequality, let us take a look on the properties of the linear function $g: V_\varrho \to V_M^*$ defined as $g(A) =f_A$ for all $A \in V_\varrho$, where $f_A$ is a linear function defined by the Hilbert-Schmidt inner product as $f_A(B) = \tr{A^* B} =\tr{AB}$ for all $B \in V_M$. By the rank-nullity theorem, we have that
    \begin{equation}
        \dim(V_\varrho) = \dim(\ker(g)) + \dim(\im(g)),
    \end{equation}
    where $\ker(g)$ and $\im(g)$ are the kernel and image of $g$. First we note that $g(A)=0$ if and only if $f_A(B) = \tr{AB} =0$ for all $B \in V_M$ which hold if and only if $A \in V_M^\perp = \{ X \in \lsh : \tr{BX}=0 \ \forall B \in V_M\}$. Thus, we have that $\ker(g) = V_\varrho \cap V_M^\perp$. On the other hand, we see that we can relate the rank of $C$ to $\dim(g)$. Namely, if $\rank{C}=r$, then there exists $r$ rows $\{\vec{C}^{(j_y)}\}_{y=1}^r$ such that $\vec{C}^{(j)} = \sum_{y=1}^r \beta^{(j)}_y \vec{C}^{(j_y)}$ for some real numbers $\{\beta^{(j)}_y\}_{y=1}^r \subset \real$ for all $j \in [m]$. We also note that now we can express the matrix elements of $C$ as $C_{jk} = \tr{\varrho_j M_k}= f_{\varrho_j}(M_k)$ for all $j \in [m]$ and $k \in [n]$. We now have that
    \begin{equation}
    \begin{split}
        (f_{\varrho_j}(M_1), \ldots, f_{\varrho_j}(M_n))&= \vec{C}^{(j)} = \sum_{y=1}^r \beta^{(j)}_y \vec{C}^{(j_y)} \\
        &= \left( \sum_{y=1}^r \beta^{(j)}_y f_{\varrho_{j_y}}(M_1), \ldots, \sum_{y=1}^r \beta^{(j)}_yf_{\varrho_{j_y}}(M_n) \right)
    \end{split}
    \end{equation}
    so that $f_{\varrho_j} = \sum_{y=1}^r \beta^{(j)}_y f_{\varrho_{j_y}}$ as linear functionals on $V_M$. Now if $A \in V_{\varrho}$ so that there exists real numbers $\{\gamma_j\}_{j=1}^m \subset \real$ such that $A= \sum_{j=1}^m \gamma_j \varrho_j$, we have that
    \begin{equation}
        f_A = \sum_{j=1}^m \gamma_j f_{\varrho_j} = \sum_{y=1}^r \left(\sum_{j=1}^m \gamma_j \beta^{(j)}_y \right) f_{\varrho_{j_y}}\, .
    \end{equation}
    Thus, we can conclude that $\dim(\im(g)) \leq r = \rank{C}$. Combining this with our earlier observations we have that
    \begin{equation}
        \rank{C} \geq \dim(V_\varrho) - \dim(V_\varrho \cap V_M^\perp) \,.
    \end{equation}
    To finalize the proof we note that since $V_M \oplus V_M^\perp = \lsh$ we have that $V_\varrho \cap V_M \oplus V_\varrho \cap V_M^\perp = V_\varrho$
    and thus
    \begin{equation}
        \rank{C} \geq \dim(V_\varrho) - \dim(V_\varrho \cap V_M^\perp) = \dim(V_\varrho \cap V_M),
    \end{equation}
    which completes the proof.
\end{proof}

It should be noted that in general we cannot hope to get rid of the inequalities, since there exist communication matrices where either one or both of the inequalities are strict. This can be shown with the following example.

\begin{example}
    Consider the qubit states $\varrho_1 = \kb{0}{0} = \frac{1}{2}(\id_2 + \sigma_z)$ and $\varrho_2 = \kb{+}{+} = \frac{1}{2}(\id_2 + \sigma_x)$, and the qubit measurement $M = \{M_\pm = \kb{\pm \iota}{\pm \iota} = \frac{1}{2}(\id_2 \pm \sigma_y)\}$. It can be seen that $d_{int} = \dim({V_\varrho\cap V_M}) = \dim({\{\vec{0}\}}) = 0$. Furthermore, since both $V_\varrho$ and $V_M$ are spaces each spanned by two linearly independent vectors in $\lsh$, $\dim(V_\varrho) = \dim(V_M) = 2 = d_{\min}$. However, the communication matrix obtained using these states $C$ is
    \begin{equation}
        C = \frac{1}{2}\begin{pmatrix}1 & 1\\1 & 1\end{pmatrix},
    \end{equation}
    which clearly shows that $\rank{C} = 1$, and thus $d_{int} < \rank{C} < d_{\min}$.
\end{example}

Using the previously obtained result, we can now see that we can deduce the informational completeness of the set-up directly from the communication matrix related to the set-up.

\begin{corollary}\label{corr:infocomplete}
     Let $C \in \mathcal{C}_{m,n}(\hi)$ be a $m \times n$ communication matrix on a $d$-dimensional Hilbert space $\hi$. Then, $\rank{C}=\dim(\lsh)=d^2$ if and only if any implementation of $C = (C_{jk})_{j,k} = (\tr{M_k \varrho_j})_{j,k}$ consists of an informationally complete measurement $M \in \mathcal{M}_n(\hi)$ and $m$ informationally complete states $\{\varrho_j\}_{j=1}^m \subset \sh$.
\end{corollary}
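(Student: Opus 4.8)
The plan is to read the corollary off directly from the two-sided bound in Proposition \ref{prop:rank}, combined with the elementary fact that $V_\varrho$ and $V_M$ are subspaces of $\lsh$, a real vector space of dimension $d^2$. First I would record that $\rank{C}$ is an intrinsic property of the matrix $C$ and hence the same for every implementation, so it suffices to relate $\rank{C}$ to the informational completeness of one (equivalently, any) implementation.

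For the ``if'' direction, assume an implementation of $C$ with states $\{\varrho_j\}_{j=1}^m$ and measurement $M$ that are both informationally complete, i.e. $V_\varrho = V_M = \lsh$. Then $\dim(V_\varrho) = \dim(V_M) = \dim(V_\varrho \cap V_M) = d^2$, and Proposition \ref{prop:rank} squeezes $d^2 \leq \rank{C} \leq d^2$, giving $\rank{C} = d^2$.

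For the ``only if'' direction, suppose $\rank{C} = d^2$ and let $\{\varrho_j\}_{j=1}^m$ and $M$ be any implementation of $C$. The upper bound in Proposition \ref{prop:rank} yields $d^2 = \rank{C} \leq \min\{\dim(V_\varrho), \dim(V_M)\}$; since $V_\varrho, V_M \subseteq \lsh$ with $\dim(\lsh) = d^2$, both dimensions are at most $d^2$, so in fact $\dim(V_\varrho) = \dim(V_M) = d^2$, i.e. $V_\varrho = V_M = \lsh$. Hence this (arbitrary) implementation is informationally complete on both the state and the measurement side.

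There is essentially no obstacle here: all of the content is already contained in Proposition \ref{prop:rank}. The only point requiring a moment's care is the logical shape of the statement, which quantifies over \emph{all} implementations of $C$; this is unproblematic precisely because $\rank{C}$ does not depend on the implementation and because Proposition \ref{prop:rank} bounds $\rank{C}$ for every implementation simultaneously.
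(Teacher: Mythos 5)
Your proof is correct and follows essentially the same route as the paper: both directions are read off from the two-sided bound of Proposition \ref{prop:rank}, squeezing $\rank{C}$ to $d^2$ when $V_\varrho = V_M = \lsh$ and using the upper bound together with $\dim(\lsh)=d^2$ for the converse. Your extra remark that $\rank{C}$ is implementation-independent, which handles the quantification over all implementations, is a sound (if implicit in the paper) clarification.
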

\begin{proof}
    If the states and the measurement are informationally complete, i.e., $V_\varrho = V_M = \lsh$, then from Prop. \ref{prop:rank} we see that then $\rank{C} = \dim(V_\varrho) = \dim(V_M) = \dim(\lsh)=d^2$. On the other hand, if $\rank{C}= d^2$, then from Prop. \ref{prop:rank} we have that $d_{\min} = \min(\dim(V_\varrho), \dim(V_M)) \geq d^2$ so that we must have $V_\varrho = V_M = \lsh$.
\end{proof}

This result shows us that given that we can trust the dimension of the prepare-and-measure set-up, we can deduce whether the set-up can be used for differentiating between any two quantum channels and channel tomography. However, as discussed before, for channel tomography we need to know also the description of the set-up in order to completely solve the channel parameters. While uniquely identifying the full description of the set-up just from the measurement statistics is a tall ask, it turns out that in some cases we can essentially self-test the set-up, i.e., identify the states and the measurement up to some symmetry \cite{Tavakolietal2018,FarkasKaniewski2019,Navascuesetal2023}. In particular, we will use another quantity of the communication matrix, namely its information storability, to draw conclusions about its implementation.

The \emph{information storability} of $C$, denoted by $\lmax{C}$, is defined as
    \begin{equation}
        \lmax{C} = \sum_{k} \max_j C_{jk} \, .
    \end{equation}
In \cite{HeinosaariKerppoLeppäjärvi2020,HeinosaariLeppäjärviPlavala2024} it was shown that the information storability relates to the amount of information that is possible to store in the theory in the way of encoding and decoding. Namely, if $C=C_{\varrho,M}$ for some states $\varrho=\{\varrho_j\}_{j=1}^m$ and and a measurement $M$, then $\lmax{C}$ tells us how much information we can decode by using the measurement $M$ when the message was encoded using the states $\varrho$, and $\lmax{C}/m$ is the average success probability for the minimum error discrimination task of discriminating the state $\varrho$ by using the measurement $M$. By optimizing $\lmax{C_{\varrho,M}}$ over all the states and the measurements we can recover the \emph{basic decoding theorem} \cite{SchumaherWestmorelandbook} which states that the error probability $P_E$ for decoding $m$ equally likely messages encoded in some $m$ states in a $d$-dimensional quantum theory, is lower-bounded by $P_E \geq 1- \frac{d}{m}$. In particular, in quantum theory for any communication matrix $C \in \mathcal{C}(\hi)$ on a $d$-dimensional Hilbert space $\hi$ we have that $\lmax{C} \leq \dim(\hi)=d$. We will see that when this inequality is saturated we can self-test the set-up.

\begin{proposition}\label{prop:implement}
     Let $C \in \mathcal{C}_{n,n}(\hi)$ be a $n \times n$ communication matrix on a $d$-dimensional Hilbert space $\hi$ such that it has no zero columns.  We have $\lmax{C}=d$ if and only if it can be implemented by a measurement $M \in\mathcal{M}_n(\hi)$ consisting of rank-1 effects $\{M_k\}_{k=1}^n$ and the implementing states $\{\varrho_j\}_{j=1}^n$ are precisely the pure eigenstates of the effects. Furthermore, in this case for any other implementing measurement $M'$ and states $\{\varrho'_j\}_{j=1}^n$ there exists a unitary or antiunitary operator $U \in \lh$ such that
     \begin{align}
         M'_k &= U M_k U^* \\
         \varrho'_j &= U \varrho_j U^* 
     \end{align}
     for all $j,k \in [n]$. 
\end{proposition}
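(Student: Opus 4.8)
The plan is to reduce the entire statement to two elementary operator facts: for a state $\varrho$ and an effect $E$ one has $\tr{\varrho E}\le\lmax{E}$, with equality exactly when $\varrho$ is supported on the eigenspace of $E$ for its largest eigenvalue; and for a positive operator $E$ one has $\lmax{E}\le\tr{E}$, with equality exactly when $\rank{E}\le 1$. For any implementation $(M,\{\varrho_j\}_{j=1}^n)$ of $C$ these combine into
\[
\lmax{C}=\sum_{k=1}^n\max_j\tr{\varrho_jM_k}\le\sum_{k=1}^n\lmax{M_k}\le\sum_{k=1}^n\tr{M_k}=\tr{\id_d}=d
\]
so $\lmax{C}\le d$ always holds, and $\lmax{C}=d$ is equivalent to every inequality above being saturated, term by term.

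For the ``if'' direction I would just insert a canonical implementation: if $M_k=c_k\kb{\phi_k}{\phi_k}$ with $\|\phi_k\|=1$ and $c_k=\tr{M_k}>0$, and $\varrho_k=\kb{\phi_k}{\phi_k}$, then $C_{kk}=c_k=\lmax{M_k}$, so $\max_jC_{jk}$ lies between $C_{kk}$ and $\lmax{M_k}$ and therefore equals $\tr{M_k}$; summing over $k$ gives $\lmax{C}=\sum_k\tr{M_k}=d$.

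For the ``only if'' direction I start from an arbitrary implementation with $\lmax{C}=d$. Saturation of the right-hand inequality for each $k$, together with the no-zero-column hypothesis (which forces $M_k\neq 0$), shows each $M_k$ has rank exactly one, $M_k=c_k\kb{\phi_k}{\phi_k}$ with $c_k=\tr{M_k}>0$. Saturation of the left-hand inequality then yields, for each $k$, a state $\varrho_j$ with $\langle\phi_k|\varrho_j|\phi_k\rangle=1$, which for a density operator forces $\varrho_j=\kb{\phi_k}{\phi_k}$, the unique pure eigenstate of $M_k$. Since there are $n$ states and $n$ effects, matching the column-maximizing states to the effects and relabelling identifies $\{\varrho_j\}_{j=1}^n$ with the pure eigenstates $\{\kb{\phi_k}{\phi_k}\}_{k=1}^n$; together with the ``if'' direction this also exhibits such an implementation.

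For the rigidity statement, let $(M',\{\varrho'_j\}_{j=1}^n)$ be any second implementation; by the previous paragraph it has the same canonical shape, $M'_k=c'_k\kb{\phi'_k}{\phi'_k}$ and $\varrho'_k=\kb{\phi'_k}{\phi'_k}$. Equating the two expressions for the matrix entries, $c_k|\langle\phi_j|\phi_k\rangle|^2=C_{jk}=c'_k|\langle\phi'_j|\phi'_k\rangle|^2$, and reading off the diagonal $j=k$ gives $c'_k=c_k$, hence $|\langle\phi'_j|\phi'_k\rangle|^2=|\langle\phi_j|\phi_k\rangle|^2$ for all $j,k\in[n]$: the two families of unit vectors have identical pairwise transition probabilities, and each family spans $\hi$ because the associated rank-one effects sum to $\id_d$. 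The remaining step, which I expect to be the main obstacle, is to promote this coincidence of transition probabilities on a spanning family to a single unitary or antiunitary $U$ with $U\phi_k=\phi'_k$ up to phase; this is a Wigner-type rigidity argument (e.g. fix phases along a linearly independent subfamily of the $\phi_k$ and check that the only residual freedom is complex conjugation). Once $U$ is obtained, $UM_kU^*=c_k\kb{U\phi_k}{U\phi_k}=c'_k\kb{\phi'_k}{\phi'_k}=M'_k$ and $U\varrho_kU^*=\varrho'_k$ are immediate.
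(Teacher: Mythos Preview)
Your treatment of the equivalence $\lmax{C}=d$ is essentially the paper's: the same chain of inequalities, the same equality analysis, and the same conclusion that each $M_k$ is rank one with the maximizing state its unique pure eigenstate. You also reach the key identity $|\langle\phi_j|\phi_k\rangle|=|\langle\phi'_j|\phi'_k\rangle|$ for all $j,k$ exactly as the paper does (reading off $c_k=c'_k$ from the diagonal first).

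The divergence is in the final Wigner step, which you correctly flag as the main obstacle. Your sketch works at the level of the Hilbert-space vectors $\phi_k$ and proposes to fix phases along a linearly independent subfamily; this is delicate, because equal moduli of inner products do not directly hand you a consistent system of phases---one must verify a cocycle-type compatibility on the Gram matrix and then separately control the remaining, linearly dependent vectors. The paper sidesteps this entirely by lifting to the operator level: since $|\langle\phi_j|\phi_k\rangle|^2=\tr{\kb{\phi_j}{\phi_j}\,\kb{\phi_k}{\phi_k}}$, equal transition probabilities are exactly equal Hilbert--Schmidt inner products among the rank-one projections. The assignment $\kb{\phi_k}{\phi_k}\mapsto\kb{\phi'_k}{\phi'_k}$ is therefore a well-defined HS isometry on $V=\rspan{\{\kb{\phi_k}{\phi_k}\}_k}$ (well-definedness comes for free from the isometry), and since $\dim V=\dim V'=\rank{C}$ by the rank proposition, it extends via any HS isometry $V^\perp\to V'^\perp$ to an HS-orthogonal bijection $h$ on all of $\lsh$. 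Now $h$ is a transition-probability-preserving bijection on the full set of pure states, so the standard Wigner theorem applies directly and produces $U$. The operator-level route thus turns your hard phase-consistency problem into a soft dimension count plus classical Wigner; what your approach would buy, if carried through, is avoiding the dependence on the earlier rank proposition and the extension to $V^\perp$.
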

\begin{proof}
    The first part was shown in \cite[Prop. 1]{HeinosaariKerppoLeppäjärvi2020} (see also \cite{HeinosaariLeppäjärviPlavala2024}). For completeness, we repeat the argument here. Let $M$ and $\{\varrho_j\}_{j=1}^n$ be some implementation of $C$. If we denote the largest eigenvalue of $M_k$ by $\lambda_k$, we see that
    \begin{align}
        \lmax{C} &= \sum_{k=1}^n \max_{j \in [n]} C_{jk} = \sum_{k=1}^n \max_{j \in [n]}\tr{M_k \varrho_j} \label{eq:lmax-1} \\
        &\leq \sum_{k=1}^n \lambda_k \leq \sum_{k=1}^n \tr{M_k} = \tr{\id_d} = d,\label{eq:lmax-2}
    \end{align}
    where we have used the fact that $\tr{M_k \varrho_j} \leq \lambda_k$ for all $k \in [n]$.
    The equality can only hold when each $M_k$ is rank-1 operator and when the states $\{\varrho_j\}_{j=1}^n$ are exactly the eigenstates of each $M_k$.
    
    Thus, there exists some $n$ unit vectors $\{\varphi_j\}_{j=1}^n \in \hi$ and some positive real numbers $\{\alpha_j \}_{j=1}^n \subset [0,1]$ such that (up to possible bijective relabeling of the states or effects) we have $M_k = \alpha_k \kb{\varphi_k}{\varphi_k}$ and $\varrho_j = \kb{\varphi_j}{\varphi_j}$ for all $j,k \in [n]$. Thus, we have that $C_{jk}= \alpha_k |\ip{\varphi_k}{\varphi_j}|^2$ for all $j,k \in [n]$. In particular, $C_{jj} = \alpha_j$ for all $j \in [n]$ and thus $\sum_{j=1}^n \alpha_j = d$. We note that this can hold only if $n\geq d$.

    Suppose that there exists another implementation with some rank-1 measurement $M'$ and corresponding eigenstates $\{\varrho'_j\}_{j=1}^n$. We then have that $M'_k = \alpha'_k \kb{\varphi'_k}{\varphi'_k}$ and $\varrho'_j = \kb{\varphi'_j}{\varphi'_j}$ for all $j,k \in [n]$ for some $n$ unit vectors $\{\varphi'_j\}_{j=1}^n \in \hi$ and some positive real numbers $\{\alpha'_j \}_{j=1}^n \subset [0,1]$. Now thus $C_{jk} = \alpha'_k |\ip{\varphi'_k}{\varphi'_j}|^2$  for all $j,k \in [n]$. Again we have that $C_{jj} = \alpha'_j$ for all $j \in [n]$ so that we can conclude that $\alpha_j = \alpha'_j$ for all $j \in [n]$. Then comparing the rest of the elements of $C$ we see that we must have 
    \begin{equation}
        |\ip{\varphi_k}{\varphi_j}| = |\ip{\varphi'_k}{\varphi'_j}|
    \end{equation}
    for all $j,k \in [n]$. 

    Let us denote $V = V_\varrho = V_M= \rspan{\{\kb{\varphi_j}{\varphi_j}\}_{j=1}^n} \subseteq \lsh$ and similarly $V' = V_{\varrho'} = V_{M'} =\rspan{\{\kb{\varphi'_j}{\varphi'_j}\}_{j=1}^n} \subseteq \lsh$ for the two vector subspaces of $\lsh$ spanned by the states (or equivalently the effects). Let us consider a linear map $f: V \to V'$ defined as $f(\kb{\varphi_j}{\varphi_j}) = \kb{\varphi'_j}{\varphi'_j}$ for all $j \in [n]$ and which is extended to $V$ by linearity so that $f\left( \sum_j \beta_j \kb{\varphi_j}{\varphi_j}\right) = \sum_j \beta_j \kb{\varphi'_j}{\varphi'_j}$ for all $\{\beta_j\}_j \subset \real$.

    Let $\tilde{d} = \dim\left(V\right)$. With some relabeling of the states and the effects we can assume that $\rspan{\{\kb{\varphi_j}{\varphi_j}\}_{j=1}^{\tilde{d}}}= V$ so that $\{\kb{\varphi_j}{\varphi_j}\}_{j=1}^{\tilde{d}}$ forms a basis of $V$. Thus, for all $\kb{\varphi_k}{\varphi_k}$ for all $k \in [n]$ there exists some real numbers $\{\beta^{(k)}_j\}_{j=1}^{\tilde{d}} \subset \real$ such that $\kb{\varphi_k}{\varphi_k} = \sum_{j=1}^{\tilde{d}}\beta^{(k)}_j \kb{\varphi_j}{\varphi_j}$. Now we see that
    \begin{equation}
        \kb{\varphi'_k}{\varphi'_k} = f(\kb{\varphi_k}{\varphi_k}) = \sum_{j=1}^{\tilde{d}} \beta^{(k)}_j \kb{\varphi'_j}{\varphi'_j}
    \end{equation}
    for all $k \in [n]$. This shows that the states $\{\kb{\varphi'_j}{\varphi'_j}\}_{j=1}^{\tilde{d}}$ spans $V'$. On the other hand, since the two implementations realize the same communication matrix $C$, and since $V = V_\varrho = V_M$ and $V' = V_{\varrho'} = V_{M'}$, from Prop. \ref{prop:rank} we can conclude that $\dim(V') = \rank{C}= \dim(V) = \tilde{d}$. Thus, the states $\{\kb{\varphi'_j}{\varphi'_j}\}_{j=1}^{\tilde{d}}$ form a basis of $V'$. Since $f$ maps the basis of $V$ to the basis of $V'$ we can conclude that $f$ is a bijection between $V$ and $V'$. Furthermore, if $A = \sum_{j=1}^n \alpha_j \kb{\varphi_j}{\varphi_j} \in V$ and $B = \sum_{k=1}^n \beta_k \kb{\varphi_k}{\varphi_k} \in V$, then
    
    \begin{equation}
    \begin{split}
        \ip{A}{B}_{HS} &= \tr{A^*B} = \tr{AB} \\
        &= \sum_{j,k=1}^n \alpha_j \beta_k |\ip{\varphi_j}{\varphi_k}|^2 \\
        &= \sum_{j,k=1}^n \alpha_j \beta_k |\ip{\varphi'_j}{\varphi'_k}|^2\\
        &= \tr{f(A)f(B)} = \tr{f(A)^*f(B)} \\
        &= \ip{f(A)}{f(B)}_{HS}        
    \end{split}
    \end{equation}
    so that $f$ is actually an isomorphism.
    
    Let us now decompose $\lsh$ into direct sums $\lsh = V \oplus V^\perp = V' \oplus V'^\perp$ with respect to the Hilbert-Schmidt inner product. Since $\dim(V) = \dim(V')$, we must have that $\dim(V^\perp) = \dim(V^\perp)$. Thus, there exists an isomorphism $g: V^\perp \to V'^\perp$ that preserves the Hilbert-Schmidt inner product. We can now use $f$ and $g$ to build a linear map $h: \lsh \to \lsh$ defined as
    \begin{equation}
        h(A) = f(A_V)+g(A_{V^\perp})
    \end{equation}
    for all $A = A_V + A_{V^\perp} \in \lsh$, where $A_V \in V$ and $A_{V^\perp} \in V^\perp$. We note that $h$ is clearly a bijection since $f$ maps a basis of $V$ to a basis of $V'$ and $g$ maps a basis of $V^\perp$ to a basis of $V'^\perp$ and since $\lsh = V \oplus V^\perp = V' \oplus V'^\perp$ we see that altogether $h$ maps a basis of $\lsh$ to a basis of $\lsh$. Furthermore, we have that for all $A = A_V + A_{V^\perp} \in \lsh$ and $B = B_V + B_{V^\perp} \in \lsh$:
    \begin{equation}
    \begin{split}
        \ip{h(A)}{h(B)}_{HS} =& \ip{h(A_V + A_{V^\perp})}{h(B_V + B_{V^\perp})}_{HS} \\
        =& \ip{f(A_V)}{f(B_V)}_{HS}+\ip{f(A_V)}{g(B_{V^\perp})}_{HS} \\
        &+\ip{g(A_{V^\perp})}{f(B_V)}_{HS}+\ip{g(A_{V^\perp})}{g(B_{V^\perp})}_{HS} \\
        =& \ip{A_V}{B_V}_{HS}+0+0+\ip{A_{V^\perp}}{B_{V^\perp}}_{HS} \\
        =& \ip{A_V}{B_V}_{HS}+\ip{A_V}{B_{V^\perp}}_{HS}\\
        &+\ip{A_{V^\perp}}{B_V}_{HS}+\ip{A_{V^\perp}}{B_{V^\perp}}_{HS} \\
        =& \ip{A}{B}_{HS} \, .
    \end{split}
    \end{equation}
    Thus, $h$ is an isomorphism on $\lsh$. In particular, it follows that $h$ is a bijection on the pure states on $\hi$ which preserves the Hilber-Schmidt inner product so that by Wigner's theorem \cite{Wigner1931,Wigner1959} there exists a unitary or antiunitary operator $U: \hi \to \hi$ such that
    \begin{equation}
        h(\kb{\psi}{\psi)}) = U \kb{\psi}{\psi)} U^*
    \end{equation}
    for all unit vectors $\psi \in \hi$. In particular, we have that 
    \begin{align}
        \varrho'_j &= \kb{\varphi'_j}{\varphi'_j} = h(\kb{\varphi_j}{\varphi_j}) =  U \kb{\varphi_j}{\varphi_j} U^* = U \varrho_j U^*, \\
        M'_k &= \alpha_k \kb{\varphi'_k}{\varphi'_k} = \alpha_k h(\kb{\varphi_k}{\varphi_k}) = \alpha_k U \kb{\varphi_k}{\varphi_k} U^* = U M_k U^* 
    \end{align}
    for all $j, k \in [n]$ as claimed.
\end{proof}

Thus, we see that the information storability of a communication matrix can be used to self-test some class of prepare-and-measure set-ups. As we saw, these type of set-ups include all set-ups where we use some rank-1 measurements together with their eigenstates. From the practical point of view this covers all the standard experimental cases where the prepare-and-measure scenario is implemented by pure states related to some orthogonal bases of $\hi$ and where we measure in those bases. Namely, let us suppose that we have $N$ orthonormal bases $\{\psi^{(i)}_j\}_{j\in[d], i \in [N]}$ for $\hi$. We can define states $\varrho^{(i)}_j = \kb{\psi^{(i)}_j}{\psi^{(i)}_j}$ and $N$ basis measurements $M^{(i)}$ each with $d$-outcomes as $M^{(i)}_j = \kb{\psi^{(i)}_j}{\psi^{(i)}_j}$ for all $i \in [N]$ and $j \in [d]$. Let us define a $N \times d$-outcome measurement $M$ as $M_{(i,j)}=\frac{1}{N} M^{(i)}_j $  for all $i \in [N]$ and $j \in [d]$. Operationally this just corresponds to uniform mixing of the basis measurements such that we keep track of which measurement we use in each round of mixing. The resulting communication matrix, implemented only with states corresponding to orthonormal bases and measurement that corresponds to measurements in those bases, will have $\lmax{C}=d$. For example, in qubit systems, the standard channel tomography using $x,y,z$ spin directions can be implemented in this way. According to our result we can self-test these standard set-ups up to unitary or antiunitary freedom.

\begin{remark}[Robustness]
    We can show that the self-testing strategy based on maximizing the information storability is robust to noise. To see this, let $C \in \mathcal{C}_{n,n}(\hi)$ be a $n \times n$ communication matrix on a $d$-dimensional Hilbert space $\hi$ such that it has no zero columns. Let now $\lmax{C}=d-\epsilon<d$ for some $\epsilon >0$. Let $\{\varrho_j\}_{j \in [n]}$ and $M \in \mathcal{M}_n(\hi)$ be some implementation of $C$ so that $C = C_{\varrho, M}$. Let the spectral decomposition of each effect be $M_k = \sum_{i=1}^d \lambda^{(k)}_i \kb{\varphi^{(k)}_i}{\varphi^{(k)}_i}$ so that $\{\varphi^{(k)}_i\}_{i=1}^d$ is an orthonormal basis for $\hi$ and $\lambda^{(k)}_1 \geq \ldots \geq \lambda^{(k)}_d$ are the eigenvalues of $M_k$ for all $k \in [n]$. For each $ k \in [n]$ let $\varrho_{j_k} \in \{\varrho_j\}_{j=1}^n$ be some state such that $\tr{M_k \varrho_{j_k}} = \max_j \tr{M_k \varrho_j}$. Now we have that
    \begin{align}\label{eq:rob0}
        \epsilon = d - \lmax{C} = \sum_{k=1}^n \left( \tr{M_k} - \tr{M_k \varrho_{j_k}} \right) = \sum_{k=1}^n \sum_{i=1}^d \left( \lambda^{(k)}_i -  \lambda^{(k)}_i \ip{\varphi^{(k)}_i}{ \varrho_{j_k} \varphi^{(k)}_i} \right) 
    \end{align}
    from which we see that 
    \begin{align} 
        \lambda^{(k)}_i\left(1-\ip{\varphi^{(k)}_i}{ \varrho_{j_k} \varphi^{(k)}_i}\right) \leq \sum_{i=1}^d \left(\lambda^{(k)}_i -  \lambda^{(k)}_i \ip{\varphi^{(k)}_i}{ \varrho_{j_k} \varphi^{(k)}_i}\right) < \epsilon 
    \end{align}
    since for every $k \in [n]$ and $i \in [d]$ every such term is positive. On the other hand, since $\lambda^{(k)}_1 \leq \sum_{i=1}^d\lambda^{(k)}_i$ and $\tr{\varrho_{j_k}}=1$, we have that
    \begin{align}
        \sum_{i=1}^d \left(\lambda^{(k)}_1-  \lambda^{(k)}_i\right) \ip{\varphi^{(k)}_i}{ \varrho_{j_k} \varphi^{(k)}_i}   \leq  \sum_{i=1}^d \left(\lambda^{(k)}_i-  \lambda^{(k)}_i \ip{\varphi^{(k)}_i}{ \varrho_{j_k} \varphi^{(k)}_i}\right) < \epsilon 
    \end{align}
    from which we get that
    \begin{align}
        \left(\lambda^{(k)}_1-  \lambda^{(k)}_i\right) \ip{\varphi^{(k)}_i}{ \varrho_{j_k} \varphi^{(k)}_i} < \epsilon
    \end{align}
    for all $k \in [n]$ and $i \in [d]$. Rewriting Eqs. \eqref{eq:rob1} and \eqref{eq:rob2} in terms of the fidelity $F$ of quantum states, defined as $F(\xi,\varrho) = \left(\tr{\left(\xi^{\frac{1}{2}} \varrho \xi^{\frac{1}{2}}\right)^{\frac{1}{2}}}\right)^2 $ for $\xi, \varrho \in \sh$, we have that
    \begin{align}
         \lambda^{(k)}_i\left[1-F\left(\kb{\varphi^{(k)}_i}{\varphi^{(k)}_i},\varrho_{j_k}\right) \right] &< \epsilon,  \label{eq:rob1} \\
         \left(\lambda^{(k)}_1-  \lambda^{(k)}_i\right) F\left(\kb{\varphi^{(k)}_i}{\varphi^{(k)}_i},\varrho_{j_k}\right) &< \epsilon \label{eq:rob2}
    \end{align}
    for all $k \in [n]$ and $i \in [d]$.
    
    Now we see that if $\epsilon$ is small, then according to Eq. \eqref{eq:rob1} either each eigenvalue $\lambda^{(k)}_i$ is small or $\varrho_{j_k}$ is very close to $\kb{\varphi^{(k)}_i}{\varphi^{(k)}_i}$. On the other hand, according to Eq. \eqref{eq:rob2} either each eigenvalue $\lambda^{(k)}_i$ is close to its maximum eigenvalue $\lambda^{(k)}_1$ or $\varrho_{j_k}$ is almost orthogonal to the eigenstate $\kb{\varphi^{(k)}_i}{\varphi^{(k)}_i}$. Combinining these observations together we have that for a small enough $\epsilon$ we have that a) either an eigenvalue $\lambda^{(k)}_i$ of $M_k$ is very small, in which case the corresponding eigenstate $\kb{\varphi^{(k)}_i}{\varphi^{(k)}_i}$ is almost orthogonal to the maximizing state $\varrho_{j_k}$, or b) the maximizing state $\varrho_{j_k}$ is close to the eigenstate $\kb{\varphi^{(k)}_i}{\varphi^{(k)}_i}$ corresponding to the eigenvalue $\lambda^{(k)}_i$ in which case $\lambda^{(k)}_i$ is close to the maximum eigenvalue $\lambda^{(k)}_1$. Since all the eigenstates are orthogonal, there can be only one eigenstate which satisfies condition b). In fact, from Eq. \eqref{eq:rob0} we can confirm that 
    \begin{align}
        \sum_{k=1}^n \sum_{i\neq 1} \lambda^{(k)}_i \leq \epsilon,
    \end{align}
    because $\tr{M_k \varrho_{j_k}}\leq \lambda^{(k)}_1$ for all $k \in [n]$. Thus, even in the presence of some small noise, the implementation must be close to the one given in Prop. \ref{prop:implement}.
\end{remark}

Getting back to identifying channels, we see that we can use the previous result to characterize a given channel up to a unitary or antiunitary transformation. Since the self-testing does not allow us to fix the frame of reference, we cannot use it to perform complete channel tomography. However it allows us to self-test the whole prepare-transform-measure scenario which allows us to characterize the channel up to unitary or antiunitary freedom.

\begin{corollary}
    Let $C \in \mathcal{C}(\hi)$ be communication matrix on a $d$-dimensional Hilbert space $\hi$. If $\rank{C} = d^2$, the measure-and-prepare scenario implementing $C$ can be used to differentiate between any two quantum channels on $\hi$. Furthermore, if $\lmax{C}=d$, then we can characterize the channel up to a unitary or antiunitary transformation.
\end{corollary}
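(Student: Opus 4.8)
The plan is to read off the statement by combining the results already established. For the first assertion, suppose $\rank{C}=d^{2}$. By Corollary~\ref{corr:infocomplete} every implementation of $C=C_{\varrho,M}$ consists of an informationally complete measurement $M\in\mathcal{M}_{n}(\hi)$ and informationally complete states $\{\varrho_{j}\}_{j=1}^{m}$. Specializing Proposition~\ref{prop:differentiability} to $\hik=\hi$, this is exactly the condition under which the prepare-and-measure set-up built from these states and this measurement differentiates between any two channels $\Phi_{1},\Phi_{2}\colon\lh\to\lh$: inserting a channel $\Phi$ only changes the recorded object to the post-channel matrix $C_{\Phi(\varrho),M}\in\mathcal{C}(\hi,\hi)$, and informational completeness of $M$ and of $\{\varrho_{j}\}_{j}$ forces $C_{\Phi_{1}(\varrho),M}=C_{\Phi_{2}(\varrho),M}\Rightarrow\Phi_{1}=\Phi_{2}$. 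So this part needs no new work.

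For the second assertion, assume in addition $\lmax{C}=d$; then, invoking Proposition~\ref{prop:implement} (which applies once $C$ is square with no zero columns, the reduction of the general case to this being routine), fix once and for all a reference implementation of $C$ by rank-$1$ effects $\{M_{k}\}_{k}$ and their pure eigenstates $\{\varrho_{j}\}_{j}$, and recall that the \emph{actual} implementation $\{\varrho'_{j}\}_{j},M'$ sitting in the box satisfies $\varrho'_{j}=U\varrho_{j}U^{*}$, $M'_{k}=UM_{k}U^{*}$ for one unknown unitary or antiunitary $U\in\lh$. Now feed the unknown channel $\Phi$ into the set-up, record $C'_{jk}=\tr{\Phi(\varrho'_{j})M'_{k}}$, and put $\tilde\Phi(X)=U^{*}\Phi(UXU^{*})U$; then $C'_{jk}=\tr{\tilde\Phi(\varrho_{j})M_{k}}$, i.e.\ $C'$ is the post-channel communication matrix of $\tilde\Phi$ with respect to the \emph{known} reference states and measurement. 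Since $\rank{C}=d^{2}$ these reference data are informationally complete, so the reconstruction formula~\eqref{eq:ch-tomo}, with the coefficients $\alpha^{(j)}_{a},\beta^{(k)}_{b}$ computed from $\{\varrho_{j}\}_{j},\{M_{k}\}_{k}$, returns every Bloch parameter of $\tilde\Phi$, hence $\tilde\Phi$ itself.

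Finally one observes that the true channel is $\Phi(Y)=U\tilde\Phi(U^{*}YU)U^{*}=\mathrm{Ad}_{U}\circ\tilde\Phi\circ\mathrm{Ad}_{U^{*}}$ with $\mathrm{Ad}_{U}(X)=UXU^{*}$, so $\Phi$ is pinned down exactly up to conjugating its input and output by the self-testing operator $U$ --- the asserted unitary/antiunitary freedom. I expect the only real care needed to be bookkeeping rather than mathematics: checking that the hypotheses of Proposition~\ref{prop:implement} are compatible with $\rank{C}=d^{2}$, absorbing the harmless relabelling of states and effects produced by that proposition into a permutation of the rows and columns of $C'$, and noting that when $U$ is antiunitary $\mathrm{Ad}_{U}$ is a transpose-type positive (non-completely-positive) map --- which is precisely why the characterization cannot be sharpened to ``up to unitaries'' alone.
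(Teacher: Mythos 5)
Your argument is correct and is essentially the reasoning the paper intends (the corollary is stated there as an immediate consequence of Corollary~\ref{corr:infocomplete}, Proposition~\ref{prop:differentiability}, Proposition~\ref{prop:implement} and Eq.~\eqref{eq:ch-tomo}, and your conjugation step $\tilde\Phi = \mathrm{Ad}_{U^*}\circ\Phi\circ\mathrm{Ad}_{U}$ is exactly the ``change of operator basis $\Omega$'' reconstruction spelled out in Example~\ref{eg:D4}). Your explicit caveats about Proposition~\ref{prop:implement}'s square/no-zero-column hypotheses and the relabeling freedom are at the same level of rigor as the paper itself, so no gap.
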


The consequences of the corollary can be demonstrated with an example as follows:

\begin{example}\label{eg:D4}
    Let $d=2$ and suppose that the communication matrix is a distinguishability matrix \cite{HeinosaariKerppoLeppäjärvi2020} given as follows:
    \begin{equation}\label{D4}
        C = D_{4,\frac{1}{2}} = \frac{1}{6}\begin{pmatrix}3 & 1 & 1 & 1\\1 & 3 & 1 & 1\\1 & 1 & 3 & 1\\1 & 1 & 1 & 3\end{pmatrix}.
    \end{equation}
    It can be seen that $\rank{C} = d^2$ and $\lmax{C}=d$. According to Prop. \ref{prop:implement}, $C$ can be implemented by a rank-1 four outcome POVM, and the eigenstates of the rank-1 effects. Furthermore, this implementation is unique up to unitaries. It is already known that $C$ can be obtained using the states:
    \begin{align}
    \varrho_0 &= \frac{1}{2}\left(\id_2 + \frac{1}{\sqrt{3}}(\sigma_x + \sigma_y + \sigma_z)\right);\label{eq:SICState0}\\
    \varrho_j &= \frac{1}{2}\left(\id_2 + \frac{1}{\sqrt{3}}\sum_{a=1}^3(-1)^{(1+\delta_{ja})}\sigma_a\right),\qquad j\in\{1, 2, 3\},\label{eq:SICStates}
    \end{align}
    and the measurement given by the effects $M_k = \frac{1}{2}\varrho_k,$ $k\in\{0,1,2,3\}$. Since any unitary or antiunitary transformation on the states (and measurements) can be equivalently considered as a transformation of the operator basis $\{\id_2,\sigma_a\}_{a\in\{x,y,z\}}$ into $\Omega = \{\id_2,\sigma'_a = U\sigma_aU^*\}_{a\in\{x,y,z\}}$ for some unitary or antiunitary $U$, any implementation of $C$ can be written using states of the form above for some operator basis $\Omega$ according to Prop. \ref{prop:implement}. The channel parameters $\Phi$ in the operator basis $\Omega$ can thus be retrieved by following the steps in Eq.\eqref{eq:ch-tomo}.
\end{example}

What happens if we do not have an informationally complete set of states and/or measurements? Let us take a look at another example:

\begin{example}\label{eg:D3}
    Let $d=2$ and suppose that the communication matrix is yet another distinguishability matrix given as follows:
    \begin{equation}\label{D3}
        C = D_{3,\frac{1}{3}} = \frac{1}{6}\begin{pmatrix}4 & 1 & 1\\1 & 4 & 1\\1 & 1 & 4\end{pmatrix}.
    \end{equation}
    Evidently, $\lmax{C} = 2$, and it has been shown in \cite{HeinosaariKerppoLeppäjärvi2020} that this matrix can be implemented using the following states
    \begin{align}
    \varrho_1 &= \frac{1}{2}(\id_2 + \sigma_z);\label{eq:trine1}\\
    \varrho_2 &= \frac{1}{2}(\id_2 + \frac{\sqrt{3}}{2}\sigma_x - \frac{1}{2}\sigma_z);\label{eq:trine2}\\
    \varrho_3 &= \frac{1}{2}(\id_2 - \frac{\sqrt{3}}{2}\sigma_x - \frac{1}{2}\sigma_z),\label{eq:trine3}
    \end{align}
    and the measurement $M = \{M_k = \frac{2}{3}\varrho_k\}$. It can be seen that $\id_2,\sigma_x,\sigma_z\in\rspan{\{\varrho_j\}} = \rspan{\{M_k\}}$. So, using the method described in Eq.\eqref{eq:ch-tomo}, we can retrieve any parameter $\Phi_{ba}$ of the channel $\Phi$, where $b,a\in\{0,1,3\}$, in the operator basis $\{\id_2,\sigma_x,\sigma_y,\sigma_z\}$, but not the parameters where $b=2$ or $a=2$.
    
    However, the matrix $D_{3,\frac{1}{3}}$ can still be useful in characterizing certain qubit channels. For example, if $D_{3,\frac{1}{3}}$ is the communication matrix after introducing the channel, we can conclude that the states after passing through the channel are unitarily equivalent to the states above, which are points on a great circle on the Bloch sphere and forms an equilateral triangle. Since quantum channels are contractive \cite{heinosaari2011mathematical}, the trace distances are non-increasing, that is, for all $j,j'\in\{1,2,3\}$
    \begin{equation}
        ||\Phi(\varrho_{j})-\Phi(\varrho_{j'})||_{\text{tr}} \leq ||\varrho_{j}-\varrho_{j'}||_{\text{tr}},
    \end{equation}
    which is only possible if $\{\varrho_j\}$ also form an equilateral triangle on a great circle in the Bloch sphere. This also means that $\{\varrho_j\}$ span that great circle and there is only one additional dimension where the channel can contract the states. However, according to the \emph{no-pancake theorem}\cite{Braun_2014,Blumeetal2010}, a map that is strictly contractive in only one direction of the Bloch sphere is not completely positive, and thus, is not a quantum channel. Therefore, the only possible options of the channel are to do nothing or flip the poles of the Bloch sphere orthogonal to the great circle spanned by $\{\varrho_j\}$. In other words, the channel is a reversible transformation from one operator basis to another, and is thus given by a unitary or antiunitary, or is the identity channel up to unitary or antiunitary transformation.
\end{example}

\section{Additional information or resources}\label{sec:additional}

Let us next consider scenarios where we can access some additional information about the channel or some additional resources at our disposal. This is information can be given to us a form of prior information or we can consider scenarios where we can try to get this additional information ourselves. In particular we will look into cases when we can use the channel sequentially multiple times and in this way get information about the powers of the channel, or when it is given to us that the channel is unital. 

\subsection{Multiple uses of the channel}

We start with the scenario when we have multiple uses of the channel sequentially for example by always just taking the transformed state after each use if the channel and sending it through the channel again. Naturally in this case we have to have $\hik=\hi$. Alternatively such a scenario can arise when we have a channel preparator which prepares us identical channels and we just apply them one after another. One could think that in the case where one gets all the prepare-and-measure statistics after one to $\lambda$ uses of the channel, that it would contain more information about the channel so that maybe the informational completeness is no longer needed. However, even in this case we see that informational completeness of the prepare-and-measure set-up is needed for differentiating any two channels. 

\begin{corollary}
    A set of quantum states $\{\varrho_j\}_{j\in[m]}\subset\mathcal{S}(\mathcal{H})$ and a quantum measurement $M \in \mathcal{M}_n(\hi)$ can differentiate any two quantum channels $\Phi_1,\Phi_2:\mathcal{L}(\mathcal{H})\to\mathcal{L}(\mathcal{H})$ after up to $\lambda\in\nat$ uses of the channel if only if the states and the measurement are informationally complete.
\end{corollary}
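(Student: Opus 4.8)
The plan is to reduce everything to Proposition~\ref{prop:differentiability} by recycling its two explicit pairs of indistinguishable channels and checking that the indistinguishability survives arbitrarily many sequential applications of the channel. I read ``differentiate after up to $\lambda$ uses'' as: $C_{\Phi_1^\ell(\varrho),M} = C_{\Phi_2^\ell(\varrho),M}$ for all $\ell\in[\lambda]$ implies $\Phi_1=\Phi_2$. The ``if'' direction is then immediate, since the hypothesis includes the case $\ell=1$, i.e.\ $C_{\Phi_1(\varrho),M}=C_{\Phi_2(\varrho),M}$, and by Proposition~\ref{prop:differentiability} informational completeness already forces $\Phi_1=\Phi_2$. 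So the only content is the converse: assuming the set-up is \emph{not} informationally complete, I must produce $\Phi_1\neq\Phi_2$ on $\lh$ with $\Phi_1^\ell(\varrho_j)=\Phi_2^\ell(\varrho_j)$ for all $j\in[m]$ and all $\ell\geq 1$, which forces all the communication matrices against $M$ (indeed against any measurement) to coincide.

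Suppose first $M$ is not informationally complete. As in Case~2 of the proof of Proposition~\ref{prop:differentiability} there is a normalized traceless self-adjoint $B_1$ with $\tr{M_k B_1}=0$ for all $k$; I take $\Phi_1(X)=\tr{X}\,\id_d/d$ and $\Phi_2(X)=\tr{N_+X}\zeta_++\tr{N_-X}\zeta_-$ with $N_\pm=\tfrac12(\id_d\pm tB_1)$ and $\zeta_\pm=\tfrac1d(\id_d\pm r_1 B_1)$, for small nonzero $t,r_1$ with $|t|,|r_1|\leq 1/\sqrt{d-1}$ (so $N_\pm$ are effects and $\zeta_\pm$ states by Lemma~\ref{lemma1}). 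The new ingredient is a range observation: $\im(\Phi_2)\subseteq\rspan{\id_d,B_1}$ and $\im(\Phi_1)=\rspan{\id_d}$, so every iterate $\Phi_i^\ell(\varrho_j)$ is a trace-one operator of the form $\tfrac1d(\id_d+\gamma B_1)$, on which each effect evaluates to the constant $\tr{M_k}/d$ because $\tr{M_k B_1}=0$. Hence $C_{\Phi_1^\ell(\varrho),M}=C_{\Phi_2^\ell(\varrho),M}$ for all $\ell$, while $\Phi_1(B_1)=0\neq t r_1 B_1=\Phi_2(B_1)$ shows $\Phi_1\neq\Phi_2$.

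Suppose instead the states are not informationally complete. As in Case~1 of Proposition~\ref{prop:differentiability} there is a nonzero $A=\alpha\id_d+\beta B_1\in V_\varrho^{\perp}$ with $B_1$ normalized and traceless, so all $\varrho_j$ share the same value $c:=\tr{B_1\varrho_j}$. I modify the construction so that the common image of the states becomes a \emph{fixed point} of the second channel. Fix $N_\pm=\tfrac12(\id_d\pm tB_1)$ with $t$ small enough that $N_\pm$ are effects and $p:=\tr{N_+\varrho_j}=\tfrac12(1+tc)\in(0,1)$; choose a state $\eta$ with $\tr{B_1\eta}=c$ that is a nontrivial mixture $\eta=p\xi_++(1-p)\xi_-$ of two distinct states $\xi_+\neq\xi_-$ (any mixed state on the affine slice $\{\text{states}:\tr{B_1 X}=c\}$, which contains the $\varrho_j$, works); and set $\Phi_1(X)=\tr{X}\eta$ and $\Phi_2(X)=\tr{N_+X}\xi_++\tr{N_-X}\xi_-$. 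Then $\Phi_1(\varrho_j)=\Phi_2(\varrho_j)=\eta$ and, crucially, $\Phi_2(\eta)=\tfrac12(1+tc)\xi_++\tfrac12(1-tc)\xi_-=\eta$, so by induction $\Phi_1^\ell(\varrho_j)=\Phi_2^\ell(\varrho_j)=\eta$ for all $\ell\geq 1$, while $\xi_+\neq\xi_-$ gives $\Phi_1\neq\Phi_2$. The only delicate point is the existence of $\eta$: the slice contains a mixed state unless it is a single pure state, which happens only when all $\varrho_j$ coincide and $\dim V_\varrho=1$; in that degenerate case one re-chooses $A$ within the now $(d^2-1)$-dimensional $V_\varrho^\perp$ to be traceless, making $c=0$ and permitting $\eta=\id_d/d$.

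I expect the iteration in Case~1 to be the main obstacle: the original pair of channels from Proposition~\ref{prop:differentiability} agrees on the states but not on their common image $\eta$, so naive composition breaks the equality already at $\ell=2$. Imposing that $\eta$ be a fixed point of $\Phi_2$ --- a single linear constraint on the re-preparation states $\xi_\pm$ --- is what repairs the argument, the one-dimensional $V_\varrho$ sub-case being the only place where the choice of annihilating operator must also be revisited.
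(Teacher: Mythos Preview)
Your proposal is correct and follows essentially the same strategy as the paper: reuse the two pairs of channels from Proposition~\ref{prop:differentiability} and verify that the indistinguishability persists under iteration. For the measurement-not-informationally-complete case your range observation $\im(\Phi_2)\subseteq\rspan{\id_d,B_1}$ is exactly the mechanism the paper invokes.

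The only difference is in the states-not-informationally-complete case. The paper achieves the fixed-point property more directly: it simply takes $\xi_1,\xi_2\in V_\varrho$, so that each $\xi_i$ automatically inherits the common $B_1$-component $c$, whence $\eta:=p\xi_1+(1-p)\xi_2\in V_\varrho$ is immediately a fixed point of $\Phi_2$ with no extra constraint to impose. Your version instead works on the affine slice $\{\tr{B_1 X}=c\}$ and decomposes an arbitrary mixed state there with the prescribed weight $p$, which is a bit more labour but has the virtue of making the degenerate case $\dim V_\varrho=1$ explicit (the paper's terse ``choose $\xi_1,\xi_2\in V_\varrho$'' glosses over this). Either way the content is the same: force the common image $\eta$ of the $\varrho_j$ to be $\Phi_2$-invariant.
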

\begin{proof}
    Clearly if the states and the measurements are informationally complete, then from Prop. \ref{prop:differentiability} we see that they can differentiate between any two channels already if $\lambda=1$. The rest of proof is a straightforward extension of the forward direction of the proof of Prop. \ref{prop:differentiability}. We have $\mathcal{H}=\mathcal{K}$, so $d_{i} = d_{o} = d$, and $\id_{d_i} = \id_{d_o}$.\\
    \textit{Case 1:} We choose the states $\xi_1,\xi_2\in V_\varrho = \rspan{\{\varrho_j\}_j}$. It is straightforward to see from \textit{Case 1} of Prop. \ref{prop:differentiability} that $\Phi_{1}^{\lambda}(X) = \Phi_{1}(X)$ and $\Phi_{2}^{\lambda}(X) = \Phi_{1}(X)$. The rest follows from the  proof of Prop. \ref{prop:differentiability}.\\
    \textit{Case 2:} It is easy to see from \textit{Case 2} of Prop. \ref{prop:differentiability} that $\Phi_{1}^{\lambda}(X) = \Phi_{1}(X) = \tr{X}\frac{\id_{d}}{d}$ for all $X\in\mathcal{L}(\mathcal{H})$. Furthermore, $\tr{\Phi_2(X)M_k} = \frac{\tr{M_k}}{d}$ is also true for all $k$ and for all $X\in\{\Phi^\lambda_2(\varrho_j)\},\lambda\in\nat$. The rest follows from the proof of Prop. \ref{prop:differentiability}.
\end{proof}

\begin{remark}
    In the previous Corollary we assumed that we have access to all the communication matrices after one to $\lambda$ uses of the channel. We note that informational completeness of the states $\{\varrho_j\}_j$ and the measurement $M$ is not sufficient to differentiate between any two quantum channels $\Phi_1,\Phi_2:\mathcal{L}(\mathcal{H})\to\mathcal{L}(\mathcal{H})$ if we have access to only exactly $\lambda\in\nat$ uses of the channel, since there exist multiple channels such that $\Phi^\lambda = id_\hi$, where $id_\hi$ is the identity channel on $\hi$ (eg: Pauli gates when $\mathcal{H} = \complex^2$ and $\lambda=2$).
\end{remark}

\subsection{Unital channels} In the case that we have some prior information about the channel, we do not necessarily need the full informational completeness of the prepare-and-measure set up. As an instance of such prior information, we can see what happens when we restrict to \emph{unital} channels, i.e., channels $\Phi: \lh \to \lh$ such that $\Phi(\id_d)=\id_d$.

\begin{proposition}\label{prop:unitaldiff}
    Let $\hi$ be a $d$-dimensional Hilbert space. 
    A set of states $\{\varrho_j\}_{j\in[m]}\subset\mathcal{S}(\mathcal{H})$ and a measurement $M \in \mathcal{M}_n(\hi)$ can differentiate between any two unital quantum channels $\Phi_1,\Phi_2:\mathcal{L}(\mathcal{H})\to\mathcal{L}(\mathcal{H})$ if and only if $M$ is informationally complete and there exists a set $S$ of $d^2-1$ linearly independent states in $\{\varrho_j\}_{j\in[m]}$ such that $\id_d \notin \rspan{S}$.
\end{proposition}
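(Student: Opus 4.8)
The plan is to first translate the combinatorial condition on the states into a clean linear-algebraic one, then prove sufficiency via the affine (Bloch) representation of unital channels, and finally prove necessity by two small explicit constructions of indistinguishable unital channels. \emph{Step 1: reformulating the state condition.} Write $\varrho_j=\frac1d(\id_d+\vec r_j\cdot\vec\sigma)$ and let $\tau\colon\lsh\to\lsh$ be the projection onto the traceless subspace along $\id_d$, so $\ker\tau=\mathbb{R}\id_d$ and $\tau(\varrho_j)$ corresponds to $\vec r_j$. I would show that the existence of a set $S$ of $d^2-1$ linearly independent states among $\{\varrho_j\}$ with $\id_d\notin\rspan S$ is equivalent to $\rspan{\{\varrho_j\}_j}+\mathbb{R}\id_d=\lsh$, i.e.\ to $\{\vec r_j\}_j$ spanning $\mathbb{R}^{d^2-1}$. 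Indeed, for such an $S$ the restriction $\tau|_{\rspan S}$ is injective (its kernel is $\rspan S\cap\mathbb{R}\id_d=\{0\}$), so the $d^2-1$ vectors $\{\tau(\varrho):\varrho\in S\}$ are independent and hence span $\mathbb{R}^{d^2-1}$; conversely, if $\vec r_{j_1},\dots,\vec r_{j_{d^2-1}}$ are independent, then the corresponding states are independent (apply $\tau$ to a vanishing combination) and their span avoids $\id_d$ (applying $\tau$ to $\id_d=\sum_l c_l\varrho_{j_l}$ forces all $c_l=0$). This is routine; the only point to watch is that $\id_d$ is exactly $\ker\tau$. The case $d=1$ is trivial and can be set aside.

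\emph{Step 2: sufficiency.} Assume $M$ is informationally complete and $\{\vec r_j\}$ spans. If $C_{\Phi_1(\varrho),M}=C_{\Phi_2(\varrho),M}$ for unital $\Phi_1,\Phi_2$, then since $\rspan{\{M_k\}}=\lsh$ we get $\Phi_1(\varrho_j)=\Phi_2(\varrho_j)$ for all $j$, exactly as in the $(\impliedby)$ part of Prop.~\ref{prop:differentiability}. Because $\Phi_i$ is unital and trace-preserving, it admits the affine representation $\Phi_i(X)=\frac{\tr X}{d}\id_d+\mathbf{T}_i(X_0)$, where $X_0$ is the traceless part of $X$ and $\mathbf{T}_i$ sends traceless operators to traceless operators; hence $\Phi_i$ is determined by $\mathbf{T}_i$. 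The equalities $\Phi_1(\varrho_j)=\Phi_2(\varrho_j)$ say that $\mathbf{T}_1$ and $\mathbf{T}_2$ agree on $\rspan{\{\varrho_j-\frac1d\id_d\}}$, which is the whole traceless self-adjoint subspace precisely because $\{\vec r_j\}$ spans $\mathbb{R}^{d^2-1}$. Hence $\mathbf{T}_1=\mathbf{T}_2$ and $\Phi_1=\Phi_2$.

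\emph{Step 3: necessity.} I argue by contraposition. If $M$ is not informationally complete, reuse the channels $\Phi_1(X)=\tr X\,\frac{\id_d}{d}$ and $\Phi_2(X)=\tr{N_+X}\zeta_++\tr{N_-X}\zeta_-$ from Case~2 of the proof of Prop.~\ref{prop:differentiability}: the same computation gives $C_{\Phi_1(\varrho),M}=C_{\Phi_2(\varrho),M}$, both maps are unital ($\Phi_1(\id_d)=\id_d$; $\Phi_2(\id_d)=\frac d2(\zeta_++\zeta_-)=\id_d$ since $\tr{N_\pm}=\frac d2$ and $\zeta_++\zeta_-=\frac2d\id_d$), and $\Phi_1\neq\Phi_2$ for a nonzero choice of the noise parameters. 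If instead $\{\vec r_j\}$ does not span, pick a nonzero traceless self-adjoint $B$ with $\tr{B\varrho_j}=0$ for all $j$ and set $\Phi_1(X)=\tr X\,\frac{\id_d}{d}$, $\Phi_2(X)=\Phi_1(X)+\lambda\,\tr{BX}\,B$ with $\lambda\in\mathbb{R}$. Then $\Phi_2$ is trace-preserving and Hermiticity-preserving (since $\tr B=0$ and $B=B^*$), unital (since $\tr B=0$), satisfies $\Phi_2(\varrho_j)=\Phi_1(\varrho_j)$ for all $j$ (so the communication matrices coincide for any $M$), and $\Phi_2\neq\Phi_1$ because $B\neq0$ and $\lambda\neq0$; since the Choi operator of $\Phi_1$ is $\frac1d\id_{d^2}>0$, the map $\Phi_2$ is completely positive for all sufficiently small $|\lambda|>0$. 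In either case the two unital channels cannot be differentiated.

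\emph{Main obstacle.} The delicate point is the necessity direction — ensuring that the perturbed $\Phi_2$ is a genuine quantum channel and not merely a Hermiticity-preserving, trace-preserving map. This is precisely why one perturbs the completely depolarizing channel, whose Choi operator is full rank, rather than, say, the identity channel, whose Choi operator is rank one and admits no such robust perturbation; one also has to keep $\lambda\neq0$ and $B\neq0$ so that $\Phi_2\neq\Phi_1$. Everything else is bookkeeping with the splitting of $\lsh$ into its scalar part $\mathbb{R}\id_d$ and its traceless self-adjoint complement, the only subtlety being that unitality removes exactly the translation part of the affine representation, so that a unital channel is pinned down by its action on the traceless block.
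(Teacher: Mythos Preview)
Your proof is correct and follows essentially the same strategy as the paper's, but with two presentational differences worth noting.

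First, your Step~1 reformulation---that the existence of such an $S$ is equivalent to the Bloch vectors $\{\vec r_j\}$ spanning $\mathbb{R}^{d^2-1}$---is exactly the content of the paper's necessity argument, which the paper establishes by a longer case split (``every $(d^2-1)$-subset is dependent'' versus ``some independent $(d^2-1)$-subset contains $\id_d$ in its span''). Both routes land on the same conclusion: failure of the state condition yields a nonzero traceless self-adjoint $B$ with $\tr{B\varrho_j}=0$ for all $j$. Your argument via the projection $\tau$ onto the traceless part is cleaner.

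Second, in the necessity direction for the states, your perturbation $\Phi_2(X)=\tr{X}\tfrac{\id_d}{d}+\lambda\,\tr{BX}\,B$ with small $|\lambda|$ is in fact \emph{the same channel} the paper constructs: expanding the paper's measure-and-prepare map $\Phi_2(X)=\tr{N_+X}\xi_++\tr{N_-X}\xi_-$ with $N_\pm=\tfrac12(\id_d\pm tB)$ and $\xi_\pm=\tfrac1d(\id_d\pm tB)$ gives precisely $\tr{X}\tfrac{\id_d}{d}+\tfrac{t^2}{d}\tr{BX}B$. The difference is only in how complete positivity is certified: you appeal to the full-rank Choi operator of the depolarizing channel, while the paper exhibits the map explicitly in measure-and-prepare form (invoking Lemma~\ref{lemma1} to bound $t$). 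The paper's version has the incidental bonus that $\Phi_2$ is manifestly entanglement-breaking, a fact reused in Sec.~\ref{sec:additional}; your Choi argument is shorter but does not make this visible.
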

\begin{proof}
    It is known that for a set $S$ of linearly independent states we have that $\id_d \in S$ if and only if the Bloch vectors of the states in $S$ are linearly dependent.
    
    ($\impliedby$) Let us suppose we have an informationally complete measurement $M \in \mathcal{M}_n(\hi)$ and a set of states $\{\varrho_j\}_{j\in[m]}\subset\mathcal{S}(\mathcal{H}),$ $m\geq d^2-1$ such that there exists a set $S\subset\{\varrho_j\}$ of $d^2-1$ linearly independent states satisfying $\id_d \notin \rspan{S}$. Denote $S' = \{\id_d\}\cup S$, $V_S = \rspan{S}$, and $V_{S'} = \rspan{S'}$. Since $S'$ contains only linearly independent elements, we have that $\dim(V_{S'}) = d^2$, which means that $V_{S'} = \lsh$. Furthermore, $\cspan{S'} = \lh$. In other words, $S'$ is informationally complete. Now, suppose that we have two unital channels $\Phi_1,\Phi_2:\mathcal{L}(\mathcal{H})\to\mathcal{L}(\mathcal{H})$ that cannot be differentiated by the states in $S$ and the measurement $M$. That is, $\tr{\Phi_1(\varrho_{j_x})M_k} = \tr{\Phi_2(\varrho_{j_x})M_k}$ for all $\varrho_{j_x}\in S$ and for all $k$. Since $M$ is informationally complete, we have that $\Phi_1(\varrho_{j_x}) = \Phi_2(\varrho_{j_x})$ for all $\varrho_{j_x}\in S$ and from the unitality of the channel, we have that $\Phi_1(\id_d) = \Phi_2(\id_d) = \id_d$. Which means that $\Phi_1(X) = \Phi_2(X)$ for all $X\in V_{S'} = \lsh$, or $\Phi_1(X) = \Phi_2(X)$ for all $X\in \cspan{S'} = \lh$, that is, $\Phi_1 = \Phi_2$. Thus, the set of states $\{\varrho_j\}$ containing linearly independent states $S$ such that $\id_d \notin S$, and an informationally complete measurement $M$ can differentiate between any two unital channels on $\hi$.
     
    ($\implies$) The channels used as examples in \textit{Case 2} of the proof for Prop. \ref{prop:differentiability} are unital when $\mathcal{H}=\mathcal{K}$. This means that even if we restrict ourselves to the set of unital quantum channels instead of the set of all quantum channels, the measurement has to be informationally complete to perfectly distinguish any two arbitrary quantum channels.
    
    Let us consider an informationally complete quantum measurement $M \in \mathcal{M}_n(\hi)$ and a set of $d$-dimensional quantum states $\{\varrho_j\}_{j\in[m]}\subset\mathcal{S}(\mathcal{H})$, $\dim(\mathcal{H})=d$. Let $S$ be any subset of $\{\varrho_j\}$ with $d^2-1$ elements, denote $V_S = \rspan{S}$. Consider that for all such $S$, at least one of the following is true:
    \begin{enumerate}
        \item $S$ is linearly dependent.
        \item $\id_d\in V_{S}$.
    \end{enumerate}
        
    \textit{Case 1:} Suppose that all such $S$ are linearly dependent. Then, $\dim(V_S) \leq d^2-2$ for all $S$, and therefore, $\dim(V_\varrho) \leq d^2-2$. This means that there exist at least two normalized orthogonal traceless self-adjoint operators that are in the orthogonal subspace $V_\varrho^\perp$.

    \textit{Case 2:} Let us now consider a linearly independent $S$ such that $\id_d\in V_S$. Suppose that $S'\subset\{\varrho_j\}$ is another such subsets distinct from $S$. Since $S$ and $S'$ are linearly independent sets, we have that $\dim(V_S) = \dim(V_{S'}) = d^2-1$. Suppose that $V_S \neq V_{S'}$. This means that there exists $\varrho_{j}\in S'$ for some $j$ such that $\varrho_{j}\notin V_S$. Let $\varrho_{j'}\in S'$ be one such state. Therefore, the set $\{\varrho_{j'}\}\cup S\subset\{\varrho_j\}_{j\in[n]}$ has $d^2$ linearly independent states, and is a basis for $\lsh$. Since it is already given that $\id_d\in V_S$, we can write that
    \begin{equation}
        \id_d = \alpha_{j_1}\varrho_{j_1} + \sum_{x=2}^{d^2-1}\alpha_{j_x}\varrho_{j_x},
    \end{equation}
    where $\varrho_{j_x}\in S$ for all $x\in[d^2-1]$. Without loss of generality, let $\alpha_1\neq0$. Now, define a set $S_0 = \{\varrho_{j'}\}\cup (S\backslash\{\varrho_1\})$. Clearly, $S_0$ is a subset of $\{\varrho_j\}$ with $d^2-1$ linearly independent elements. Furthermore, $\id_d\notin V_{S_0}$, because it has a component missing ($\varrho_1$) which cannot be written in terms of the other components due to $\{\varrho_{j'}\}\cup S$ being a set of linearly independent vectors. This is a contradiction, and therefore $V_S = V_{S'}$ for all such $S,S'\subset\{\varrho_j\}$, implying that $V_\varrho = V_S$. Since $\id_d\in V_S$, we have that $\id_d\in V_\varrho$. Furthermore, since $\dim(V_\varrho) = \dim(V_S) = d^2-1$, there exists at least one normalized traceless self-adjoint operator in the orthogonal subspace $V_\varrho^\perp$.
    
    Thus, in both cases, there exists at least one normalized traceless self-adjoint operator $B\in V_\varrho^\perp$. That is $\tr{B} = 0$, $\tr{B^2} = d$, and $\tr{A^*B} = \tr{AB} = 0$ for all $A\in V_\varrho$.
    
    Now, define two operators similar to \textit{Case 1} of Prop. \ref{prop:differentiability}: $\xi_\pm = \frac{1}{d}(\id_{d}\pm tB)$, where $B\in V_\varrho$, $|t|\leq\frac{1}{\sqrt{d-1}}$. Clearly, these are valid density operators, as proven in Lemma \ref{lemma1}. Furthermore, since $O\leq \xi_\pm\leq \id_{d}$, $N=\{N_+ = \frac{d}{2}\xi_+, N_- = \frac{d}{2}\xi_-\}$ is a valid measurement ($N_++N_-=\id_{d}$). With this, let us define two channels $\Phi_1,\Phi_2:\mathcal{L}(\mathcal{H})\to\mathcal{L}(\mathcal{H})$ as follows:
    \begin{align}
        \Phi_1(X) &= \tr{X}\frac{\id_{d}}{d} \\
        \Phi_2(X) &= \tr{N_+X}\xi_+ + \tr{N_-X}\xi_-.
    \end{align}
    It can be easily seen that both $\Phi_1$ and $\Phi_2$ are unital, as $\Phi_2(\id_d) = \tr{N_+\id_d}\xi_+ + \tr{N_-\id_d}\xi_- = \frac{d}{2}(\xi_+ + \xi_-) = \id_d = \Phi_1(\id_d)$. Moreover, for all $j\in[m]$, $\tr{\varrho_jN_\pm} = \frac{1}{2}\tr{\varrho_j(\id_d \pm tB)} = \frac{1}{2}\tr{\varrho_j} = \frac{1}{2}$, and therefore, for all $j\in[m]$, $\Phi_2(\varrho_j) = \frac{1}{2}(\xi_+ + \xi_-) = \frac{\id_{d}}{d} = \Phi_1(\varrho_j)$. That is, $\Phi_1$ and $\Phi_2$ cannot be differentiated by using $\{\varrho_j\}$. Therefore, $\Phi_1$ and $\Phi_2$ can be differentiated by using a set of states $\{\varrho_j\}$ and any measurement only if there exist $d^2-1$ states in $\{\varrho_j\}$ such that their associated Bloch vectors are linearly independent.
\end{proof}

The above result gives us a mathematical conditins for the set-ups to be used to differentiate between any two unital channels. A natural question is whether these conditions can be inferred directly from the set-up as in Sec. \ref{sec:sdi-identification} and what additional information we need to do so. We will see that with some additional resources we can verify the condition of Prop. \ref{prop:unitaldiff} or even the information about the unitality of the channels. We start with the following observation.

\begin{lemma}\label{lemma:fixedpt}
    Let $C,C' \in \mathcal{C}(\hi)$ such that $C = C_{\varrho,M}$ and $C' = C'_{\Phi(\varrho),M}$ for some informationally complete measurement $M$, and a channel $\Phi:\lh\to\lh$. Denote $\vec{\varrho} = (\varrho_1,\cdots,\varrho_m)^\transpose$. Then, for any non-zero vector $\vec{\alpha} = (\alpha_1,\cdots,\alpha_m)^\transpose\in\real^m$, $\vec{\alpha}\in\ker(C'^\transpose-C^\transpose)$ if and only if $\vec{\alpha}\cdot\vec{\varrho}$ is a fixed point of $\Phi$.
\end{lemma}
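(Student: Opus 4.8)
The plan is to unwind the kernel condition into a statement about Hilbert--Schmidt orthogonality and then invoke informational completeness of $M$. First I would compute the entries of the matrix in question: the $(k,j)$-entry of $C'^\transpose - C^\transpose$ equals $C'_{jk} - C_{jk} = \tr{\Phi(\varrho_j)M_k} - \tr{\varrho_j M_k} = \tr{(\Phi(\varrho_j)-\varrho_j)M_k}$. Hence the condition $(C'^\transpose - C^\transpose)\vec{\alpha}=0$ is exactly the system $\sum_{j=1}^m \alpha_j \tr{(\Phi(\varrho_j)-\varrho_j)M_k}=0$ for every $k \in [n]$.

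Next I would use linearity of $\Phi$ to pull the sum inside. Setting $X := \vec{\alpha}\cdot\vec{\varrho} = \sum_{j=1}^m \alpha_j \varrho_j$, which is self-adjoint since the $\alpha_j$ are real and each $\varrho_j$ is self-adjoint, the above system becomes $\tr{(\Phi(X)-X)M_k}=0$ for all $k\in[n]$. Because $\Phi$ is completely positive it preserves adjoints, so $\Phi(X)-X$ is itself a self-adjoint operator, i.e.\ an element of $\lsh$.

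Finally I would invoke informational completeness of $M$, i.e.\ $\rspan{\{M_k\}_{k=1}^n}=\lsh$: a self-adjoint operator that is Hilbert--Schmidt orthogonal to each $M_k$ is then orthogonal to all of $\lsh$ and therefore equal to the zero operator. Consequently the system $\tr{(\Phi(X)-X)M_k}=0$ for all $k$ holds if and only if $\Phi(X)=X$, that is, if and only if $X=\vec{\alpha}\cdot\vec{\varrho}$ is a fixed point of $\Phi$. Chaining these equivalences gives both directions of the claimed statement simultaneously.

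I do not expect a real obstacle here, since the argument is essentially a one-line linear-algebra computation; the only point requiring a moment's care is ensuring that $\Phi(X)-X$ lies in $\lsh$, so that informational completeness (a statement about spanning the \emph{self-adjoint} operators) genuinely forces it to vanish. (Equivalently, one could note that the effects span $\lh$ over $\complex$ and conclude that $\Phi(X)-X=0$ directly.)
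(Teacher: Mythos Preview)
Your argument is correct and follows essentially the same route as the paper: compute the $k$th component of $(C'^\transpose-C^\transpose)\vec{\alpha}$ as $\tr{\Phi(\vec{\alpha}\cdot\vec{\varrho})M_k}-\tr{(\vec{\alpha}\cdot\vec{\varrho})M_k}$ and then invoke informational completeness of $M$ to conclude that this vanishes for all $k$ precisely when $\Phi(\vec{\alpha}\cdot\vec{\varrho})=\vec{\alpha}\cdot\vec{\varrho}$. The only cosmetic difference is that you run the two implications simultaneously and remark on the self-adjointness of $\Phi(X)-X$, which the paper leaves implicit.
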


\begin{proof}
    Consider a real vector $\vec{\alpha} = (\alpha_1,\cdots,\alpha_m)^\transpose$. It can be seen that $(C^\transpose\vec{\alpha})_k = \sum_{j\in[m]}C_{kj}\alpha_{j} = \sum_{j\in[m]}\tr{\varrho_jM_k}\alpha_{j} = \tr{\left(\vec{\alpha}\cdot\vec{\varrho}\right)M_k}$. Similarly, $(C'^\transpose\vec{\alpha})_k = \tr{\Phi\left(\vec{\alpha}\cdot\vec{\varrho}\right)M_k}$.
    
    Now, suppose that $\vec{\alpha}\in\ker(C'^\transpose-C^\transpose)$, that is, $C'^\transpose\vec{\alpha} = C^\transpose\vec{\alpha}$. This is possible only if $\tr{\left(\vec{\alpha}\cdot\vec{\varrho}\right)M_k} = \tr{\Phi\left(\vec{\alpha}\cdot\vec{\varrho}\right)M_k}$ for all $k\in[n]$. Since $M$ is known to be informationally complete, we can conclude that $\Phi\left(\vec{\alpha}\cdot\vec{\varrho}\right) = \vec{\alpha}\cdot\vec{\varrho}$. In other words, $\vec{\alpha}\cdot\vec{\varrho}$ is a fixed point of $\Phi$.

    Conversely, let $\vec{\alpha}\cdot\vec{\varrho}$ be a fixed point of $\Phi$, that is, $\Phi\left(\vec{\alpha}\cdot\vec{\varrho}\right) = \vec{\alpha}\cdot\vec{\varrho}$. Therefore, we have that $\tr{\left(\vec{\alpha}\cdot\vec{\varrho}\right)M_k} = \tr{\Phi\left(\vec{\alpha}\cdot\vec{\varrho}\right)M_k}$ for all $k\in[n]$, which means that $(C'^\transpose\vec{\alpha})_k = (C^\transpose\vec{\alpha})_k$ for all $k\in[n]$. In other words, $C'^\transpose\vec{\alpha} = C^\transpose\vec{\alpha}$, or $(C'^\transpose- C^\transpose)\vec{\alpha} = \vec{0}$. That is, $\vec{\alpha}\in\ker(C'^\transpose-C^\transpose)$.
\end{proof}

In a practical scenario, it makes sense to study a prepare-and-measure set-up while having access to: (1) a noiseless set-up, where the states encoding the information reach the receiver without any changes, and (2) a completely noisy set-up, where none of the encoded information reaches the receiver, and what they receive is just noise. In our case, this would be equivalent to having access to the original communication matrix $C$ and the communication matrix $C_0$ in the presence of the completely depolarizing channel $\Phi_0: \lh \to \lh$ defined as $\Phi_0(X) = \frac{\tr{X}}{d} \id_d$ for all $X \in \lh$. With this additional resource, we can conclude the following.

\begin{corollary}\label{cor:unital}
    Let $C \in \mathcal{C}(\hi)$ such that $C = C_{\varrho,M}$ for some informationally complete measurement $M$. Let $C_0=C_{\Phi_0(\varrho),M}$ be the communication matrix related to the completely depolarizing channel $\Phi_0$. Then,
    \begin{itemize}
        \item[\textit{i)}] if $\ker(C_0^\transpose-C^\transpose) = \{\vec{0}\}$ and $\rank{C} \geq d^2-1$, then the set-up can be used to differentiate between any two unital channels, 
        \item[\textit{ii)}] if $\ker(C_0^\transpose-C^\transpose) \neq \{\vec{0}\}$, a channel $\Phi: \lh \to \lh$ implementing a communication matrix $C' = C_{\Phi(\varrho),M}$ is unital if and only if $\ker(C'^\transpose-C^\transpose)\cap\ker(C_0^\transpose-C^\transpose) \neq \{\vec{0}\}$.
    \end{itemize}
\end{corollary}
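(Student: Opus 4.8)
The plan is to convert both statements into assertions about \emph{fixed points} of the channels and then read those off the stated kernels using Lemma \ref{lemma:fixedpt}. The key preliminary observation is that, within $\lsh$, the completely depolarizing channel $\Phi_0$ has fixed-point set exactly $\real\,\id_d$, since $\Phi_0(X)=\frac{\tr{X}}{d}\id_d$ equals $X$ if and only if $X$ is a scalar multiple of $\id_d$. Applying Lemma \ref{lemma:fixedpt} with $\Phi=\Phi_0$ and $C'=C_0$ then shows that $\vec{\alpha}\in\ker(C_0^\transpose-C^\transpose)$ precisely when $\vec{\alpha}\cdot\vec{\varrho}=c\,\id_d$ for some $c\in\real$. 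I would record this equivalence first and use it throughout.

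For part \textit{i)}: the hypothesis $\ker(C_0^\transpose-C^\transpose)=\{\vec{0}\}$ says that no nonzero $\vec{\alpha}$ satisfies $\sum_j\alpha_j\varrho_j=c\,\id_d$. Taking $c=0$ forces $\{\varrho_j\}_{j\in[m]}$ to be linearly independent; taking $c\neq0$ forces $\id_d\notin\rspan{\{\varrho_j\}}$. Next I would invoke Prop. \ref{prop:rank} for the channel-free matrix $C$: since $M$ is informationally complete, $\rank{C}\le\dim(V_\varrho)$, so $\dim(V_\varrho)\ge d^2-1$; combined with linear independence, $m=\dim(V_\varrho)\in\{d^2-1,d^2\}$, and $m=d^2$ is impossible because it would force $\id_d\in V_\varrho=\lsh$. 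Hence $m=d^2-1$ and $S:=\{\varrho_j\}_{j\in[m]}$ is a set of $d^2-1$ linearly independent states with $\id_d\notin\rspan{S}$; together with informational completeness of $M$, Prop. \ref{prop:unitaldiff} yields that the set-up differentiates any two unital channels.

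For part \textit{ii)}: let $\Phi$ implement $C'=C_{\Phi(\varrho),M}$. By Lemma \ref{lemma:fixedpt}, the intersection $\ker(C'^\transpose-C^\transpose)\cap\ker(C_0^\transpose-C^\transpose)$ is exactly the set of $\vec{\alpha}$ for which $\vec{\alpha}\cdot\vec{\varrho}$ is simultaneously a fixed point of $\Phi$ and a scalar multiple of $\id_d$. If $\Phi$ is unital, $\id_d$ is a fixed point, so this intersection equals $\ker(C_0^\transpose-C^\transpose)$, which is nonzero by hypothesis. If $\Phi$ is not unital, then $\Phi(\id_d)\neq\id_d$ forces $\Phi(c\,\id_d)=c\,\id_d$ only for $c=0$, so the intersection collapses to $\{\vec{\alpha}:\vec{\alpha}\cdot\vec{\varrho}=0\}$, which by informational completeness of $M$ is $\ker(C^\transpose)$; and by Prop. \ref{prop:rank} (with $M$ informationally complete, $\rank{C}=\dim(V_\varrho)$) this is trivial exactly when the prepared states are linearly independent.

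I expect this last point to be the main obstacle: if $\{\varrho_j\}$ were linearly dependent, then $\ker(C^\transpose)$ is already nontrivial and is contained in both kernels for \emph{every} channel, so the equivalence in \textit{ii)} would fail for non-unital $\Phi$. Thus the argument genuinely relies on the noiseless set-up using linearly independent states (which the natural reading of the scenario supplies, and which is in any case visible from $\rank{C}=m$); granting that, the remaining verifications are routine linear algebra together with Lemma \ref{lemma:fixedpt}.
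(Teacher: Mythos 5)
Your argument follows essentially the same route as the paper's: Lemma \ref{lemma:fixedpt} applied to $\Phi_0$, whose only self-adjoint fixed points are the real multiples of $\id_d$, then Prop. \ref{prop:rank} to bound $\dim(V_\varrho)$ and Prop. \ref{prop:unitaldiff} to conclude part \textit{i)}. Part \textit{i)} is correct as you give it (your extraction of linear independence from $\ker(C_0^\transpose-C^\transpose)=\{\vec{0}\}$ is in fact slightly more careful than the paper's phrasing, which identifies triviality of the kernel with $\id_d\notin V_\varrho$ and ignores dependency vectors, but only uses the harmless direction).

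The obstacle you flag in part \textit{ii)} is genuine, and it is a gap in the statement and in the paper's own proof rather than in your reasoning. The paper takes a nonzero $\vec{\alpha}\in\ker(C_0^\transpose-C^\transpose)$, writes $\vec{\alpha}\cdot\vec{\varrho}=\lambda\id_d$, and then divides by $\lambda$, silently assuming $\lambda\neq 0$. If the states are linearly dependent there exist nonzero $\vec{\alpha}$ with $\vec{\alpha}\cdot\vec{\varrho}=0$ (i.e. $\lambda=0$), and, exactly as you say, such $\vec{\alpha}$ lie in $\ker(C'^\transpose-C^\transpose)\cap\ker(C_0^\transpose-C^\transpose)$ for \emph{every} channel, so the implication ``nontrivial intersection $\Rightarrow$ unital'' fails for non-unital $\Phi$. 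Hence item \textit{ii)} needs either your added hypothesis that the $\varrho_j$ are linearly independent (which, since $M$ is informationally complete, is indeed certified by $\rank{C}=m$ via Prop. \ref{prop:rank}, as you note), or a reformulation, e.g. requiring the intersection to contain a vector $\vec{\alpha}$ with $\sum_j\alpha_j\neq 0$ (note $\lambda=\tfrac{1}{d}\sum_j\alpha_j$). With that proviso made explicit, your verification of both directions via Lemma \ref{lemma:fixedpt} matches the intended proof.
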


\begin{proof}
    \textit{i)} It is known that the only fixed points of $\Phi_0$ in $\lsh$ are real scalar multiples of $\id_d$. Thus, according to Lemma \ref{lemma:fixedpt}, any real vector $\vec{\alpha}\in\ker(C_0^\transpose-C^\transpose)$ if and only if $\vec{\alpha}\cdot\vec{\varrho} = \lambda\id_d$ for some $\lambda\in\real$. In other words, $\ker(C_0^\transpose-C^\transpose) = \{\vec{0}\}$ if and only if $\id_d\notin V_\varrho$. Thus, also for any subset $S \subseteq \{\varrho_j\}_j$ we have $\id_d \notin \rspan{S}$. If now $\rank{C}\geq d^2-1$, since we know that the measurement $M$ is informationally complete, from Prop. \ref{prop:rank} we see that $\dim(V_\varrho)\geq d^2-1$. Thus, there exists a subset $S \subseteq \{\varrho_j\}_j$ of $d^2-1$ linearly independent states such that $\id_d \notin \rspan{S}$. Together with the informational completeness of $M$ the conditions of Prop. \ref{prop:unitaldiff} are met and thus the set-up can be used to differentiate between any two unital channels.
    
    \textit{ii)} Now, suppose that $\ker(C_0^\transpose-C^\transpose) \neq \{\vec{0}\}$. As explained above, any non-zero vector $\vec{\alpha}\in\ker(C_0^\transpose-C^\transpose)$ gives us $\vec{\alpha}\cdot\vec{\varrho} = \lambda\id_d$ for some $\lambda\in\real$. This means that for any channel $\Phi$ that implements a communication matrix $C'= C_{\Phi(\varrho),M}$ using the same set-up, as per Lemma \ref{lemma:fixedpt}, we have that $\vec{\alpha} \in \ker(C'^\transpose-C^\transpose)$ if and only if $\frac{1}{\lambda}\vec{\alpha} \cdot \vec{\varrho} = \id_d$ is a fixed point of the channel $\Phi$, i.e., $\Phi$ is unital.
\end{proof}

Thus, the previous corollary shows that if we know that the measurement of the prepare-and-measure set-up is informationally complete and we have access to the completely depolarizing channel as an additional resource, then we can \textit{a)} either determine the unitality of an unknown channel, or \textit{b)} if we observe that $\rank{C}\geq d^2-1$ then we can verify that the set-up can be used to differentiate between any two unital channels, but not both. This is a form of complementarity: either we can verify a property of a channel (unitality) or we can verify that set-up is useful (for differentiation) for channels with this property.

Let us still comment on our assumption that the information about the informational completeness of the measurement is needed. Of course, if we observe that $\rank{C}=d^2$, we know (as given by Prop. \ref{prop:rank}) that both the states and the measurement are informationally complete so that the set-up can be used to differentiate between any two channels. Naturally in this instance the case \textit{i)} of Corollary \ref{cor:unital} is no longer useful, but still case \textit{ii)} can be used to determine the unitality of any channel given that we have access to the completely depolarizing channel. Thus, in this case we can indeed verify whether a given channel is unital or not just by observing that $\rank{C}=d^2$ and $\ker(C_0^\transpose-C^\transpose) \neq \{\vec{0}\}$.

However, in the case when we are merely given the information about the informational completeness of the measurement related to the set-up, we can still try to consider how to verify this information with some possible additional resources. A simple observation is that if we can prepare an informational complete set of states independent of the original set-up, then we can use it to verify the informational completeness of the measurement that is used in the set-up. Namely, if we use a set of states $\{\varrho'_l\}_l$, given a set-up consisting of states $\{\varrho_j\}_j$ and a measurement $M$, then if we see that $\rank{C_{\varrho',M}}=d^2$, we can infer that $M$ is informationally complete (as is our states $\{\varrho'_l\}_l$). Then if additionally $\rank{C_{\varrho,M}}\geq d^2-1$, we know that $\dim(V_\varrho)\geq d^2-1$ as before. So given that we have access to the additional resource of a state preparator which is capable of preparing an informationally complete set of states, then we can verify the informational completeness of the measurement. As explained above, then together with Cor. \ref{cor:unital} we can either verify the usefulness of the set-up in differentiating unital channels or verifying the unitality of an unknown channel. This shows that with some additional resources we can make weaker conclusions about the set-up even if we do not observe that $\rank{C}=d^2$ and/or $\lmax{C}=d$ as in Sec. \ref{sec:sdi-identification}.

Lastly, we see that as differentiability of channels is possible without the full informational completeness of the set-up, so is also full channel tomography. Nevertheless, for channel tomography we still need to know the full description of the set-up. Let us exemplify this in the case of unital channels. Suppose that we have an informationally complete  measurement $M \in \mathcal{M}_n(\hi)$ and a set of  states $\{\varrho_j\}_{j\in[m]}\subset\mathcal{S}(\mathcal{H})$. From the informational completeness of the measurement we get that
    \begin{equation}
        \sum_k\beta^k_bM_k=\sigma_b\qquad\forall b
    \end{equation}
   for some coefficients $\{\beta^k_b\}_{b,k} \subset \real$, and where $\{\sigma_b\}_{b=1}^{d^2}$ is an orthogonal basis consisting of self-adjoint operators with $\sigma_0=\id_d$ satisfying $\tr{\sigma_b}=d\delta_{0b}$ and $\tr{\sigma_a\sigma_b}=d\delta_{ab}$ .
    Suppose that we have a unital channel $\Phi \in \mathrm{Ch}(\hi)$ which is not the completely depolarizing channel. (In the case the unital channel is the completely depolarizing channel then according to Example \ref{ex1}, it is not necessary to know anything about the states $\{\varrho_j\}$. In fact, it is sufficient to know $\tr{M_k}$ for all $k\in[n]$ to completely characterize $\Phi$.) Now using the communication matrix $C'$ given by the states $\{\varrho_j\}_j$, the measurement $M$ and the channel $\Phi$ we get that
    \begin{equation}
    \begin{split}
        \sum_k\beta^k_bC'_{jk} &= \sum_k\beta^k_b\tr{\Phi(\varrho_{j})M_k},\\
        &= \tr{\Phi(\varrho_{j})\sum_k\beta^k_bM_k},\\
        &= \tr{\Phi(\varrho_{j})\sigma_b}.        
    \end{split}
    \end{equation}
    Expanding each $\varrho_{j}$ in the Bloch representation and applying the affine transformation of the channel $\Phi_{ba}$ ($\Phi_{0b}=d\delta_{0b}$ for valid quantum channels and $\Phi_{a0}=d\delta_{a0}$ since the channel is unital) gives us
    \begin{equation}
        \sum_k\beta^k_bC'_{jk} = \frac{1}{d}\tr{\left(\id_d + \sum_{u=1}^{d^2-1}\sum_{a=1}^{d^2-1}\Phi_{ua}r_a^{(j)}\sigma_u\right)\sigma_b} = \sum_{a=1}^{d^2-1}\Phi_{ba}r_a^{(j)},
    \end{equation}
    where $\vec{r}^{(j)} =(r^{(j)}_1, \ldots, r^{(j)}_{d^2-1})$ is the Bloch vector of the state $\varrho_j$ for all $j \in [m]$. Let us denote by $T=(\Phi_{ba})$ the $(d^2-1)\times (d^2-1)$ matrix of channel parameters, by $C=(C_{jk})$ the $m\times n$ communication matrix, by $B=(\beta^k_b)$ the $n\times (d^2-1)$ matrix of coefficients, and by $R= (r_a^{(j)})$ the $(d^2-1)\times m$ matrix with Bloch vectors as columns. The above equation becomes:
    \begin{equation}
        (CB)^\transpose = TR
    \end{equation}
    Thus, to obtain all parameters $\Phi_{ba}$, that is, the elements of the matrix $T$, the right pseudo-inverse $R^{-1}$ must exist. This is possible if and only if $R$ has a full-row rank, $d^2-1$, that is, $R$ has $d^2-1$ linearly independent columns. Since the columns of $R$ are the Bloch vectors associated with the states, we conclude that all channel parameters $\Phi_{ba}$ can be obtained if and only if there exist $d^2-1$ states in $\{\varrho_j\}$ such that the associated Bloch vectors are linearly independent. In this case we thus have that  $T = (CB)^\transpose R^{-1}$. We note that again in order to actually solve these channel parameters we must know the full descriptions of $B$ and $R$, i.e., the full descriptions of the states and the measurement.

\subsection{Entanglement-breaking channels} Lastly, let us consider the class of entanglement-breaking channels. By definition a channel $\Phi: \lh \to \lh$ is entanglement-breaking if and only if $(id \otimes \Phi)(\varrho)$ is separable for all $\varrho \in \mathcal{S}(\hi \otimes \hi)$. Equivalently, by \cite{HorodeckiShorRuskai2003}, $\Phi$ is entanglement-breaking if and only if it is of the \textit{measure-and-prepare} form, i.e., there exists some measurement $N \in \mathcal{M}_l(\hi)$ and some set of states $\{\xi_i\}_{i=1}^l \subset \hi$ such that
\begin{align}
    \Phi(X) = \sum_{i=1}^l \tr{N_i X} \xi_i
\end{align}
for all $X \in \lh$. 

Our first observation is that in the proof of Prop. \ref{prop:differentiability} both the introduced channels $\Phi_1$ and $\Phi_2$, which we could not differentiate between if the states and the measurement related to the prepare-and-measure set-up were not both informationally complete, are of the measure-and-prepare form so that they are entanglement-breaking. This means that \emph{even if we know that we are differentiating between entanglement-breaking channels, we cannot differentiate between any two such channels unless our prepare-and-measure is fully informationally complete}. Thus, in the case of differentiating between entanglement-breaking channels, or in the case of performing full channel tomography for them, the assumption of informational completeness cannot be relaxed.

On the other hand we can still ask how the entanglement-breaking property of a channel can be detected from the properties of the related communication matrix. In order to do so we need two additional quantities of communication matrices, namely the nonnegative rank and the positive semidefinite rank of a communication matrix.

Let $C$ be a nonnegative $m \times n$ matrix, meaning that all of its elements are nonnegative real numbers. Then the \emph{nonnegative factorization} of $C$ is any factorization $C=AB$, where $A$ and $B$ are nonnegative $m \times k$ and $k \times n$ matrices, respectively, for some $k \in \nat$. We say that the \emph{nonnegative rank} of $C$, denoted by $\nnrank{C}$, is the smallest $k \in \nat$ for which there exists such nonnegative factorization \cite{Cohen1993Nonnegative}. Clearly $\rank{C} \leq \nnrank{C} \leq \min(m,n)$. In \cite{HeinosaariKerppoLeppäjärvi2020} it has been shown that for a communication matrix $C$ we have $\nnrank{C}\leq k$ if and only if it can be implemented in a prepare-and-measure scenario with a $k$-dimensional classical system (equivalently $k$-dimensional quantum system with only diagonal density matrices and effects). 

A nonnegative matrix $C$ also admits a \emph{positive semidefinite factorization} as $C_{jk} = \tr{A_j B_k}$ for all $j \in [m]$ and $k \in [n]$, where $A$ and $B$ are $k \times k$ positive semidefinite complex matrices for some $k \in \nat$. We define the \emph{positive semidefinite rank (or psd-rank)} of $C$, denoted by $\psdrank{C}$, as the minimum $k \in \nat$ such that such psd-factorization exists \cite{Fawzi2015}. It can be shown that $\sqrt{\rank{C)}} \leq \psdrank{C} \leq \nnrank{C}$ for all nonnegative matrices $C$ \cite{Gouveia2013}. In \cite{FiMaPoTiWo12,LeeWeideWolf2017} it was shown that for a communication matrix $C$ we have that $\psdrank{C}\leq d$ if and only if it can be implemented in a prepare-and-measure scenario with a $d$-dimensional quantum system, i.e., $C \in \mathcal{C}(\hi)$ for a $d$-dimensional Hilbert space $\hi$. In \cite{HeinosaariKerppoLeppäjärviPlavala2024} it was shown that for all $k,d \in \nat$ there exists a communication matrix $C$ with $\psdrank{C} \leq d$ such that $\nnrank{C} \geq k$ so that the set of communication matrices implemented with a $d$-dimensional quantum theory cannot be implemented with any fixed-size classical system.

We can now state our first result about detecting entanglement-breaking channels by their communication matrices as follows:

\begin{proposition}\label{prop:ebchannel}
    A communication matrix $C' \in \mathcal{C}(\hi)$ on a $d$-dimensional Hilbert space can be realized by an entanglement-breaking channel $\Phi: \lh \to \lh$ if and only if $C'$ admits a nonnegative factorization $C' = AB$ such that
    \begin{align}
        \max(\mathrm{rank}_{psd}(A),\mathrm{rank}_{psd}(B))\leq d \, .
    \end{align}
\end{proposition}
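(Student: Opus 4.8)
The plan is to read the factorization off the measure-and-prepare structure of entanglement-breaking channels (Horodecki--Shor--Ruskai \cite{HorodeckiShorRuskai2003}) and to convert back using the psd-rank characterization of the realizable set $\mathcal{C}(\hi)$ from \cite{FiMaPoTiWo12,LeeWeideWolf2017}. For the forward implication I would start from an implementation $C'_{jk} = \tr{\Phi(\varrho_j)M_k}$ with $\Phi$ entanglement-breaking, write $\Phi(X)=\sum_{i=1}^l\tr{N_iX}\xi_i$ for some $N \in \mathcal{M}_l(\hi)$ and states $\{\xi_i\}_{i=1}^l \subset \sh$, and substitute to get $C'_{jk}=\sum_i \tr{\varrho_j N_i}\tr{\xi_i M_k}$. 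This exhibits $C'=AB$ with $A=(\tr{\varrho_j N_i})_{j,i}$ and $B=(\tr{\xi_i M_k})_{i,k}$, both nonnegative. The key observation is that $A$ and $B$ are themselves communication matrices realizable on $\hi$, namely $A=C_{\varrho,N}$ and $B=C_{\xi,M}$, so $\psdrank{A}\le d$ and $\psdrank{B}\le d$ by the cited result, which is precisely the claimed factorization.

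For the converse I would begin with a nonnegative factorization $C'=AB$ ($A$ of size $m\times k$, $B$ of size $k\times n$) with $\psdrank{A},\psdrank{B}\le d$ and build an entanglement-breaking channel by the natural ansatz: realize $A$ as the communication matrix of states $\{\varrho_j\}$ and a POVM $N$, realize $B$ as the communication matrix of states $\{\xi_i\}$ and a POVM $M$, and set $\Phi(X)=\sum_i\tr{N_iX}\xi_i$; then $\tr{\Phi(\varrho_j)M_k}=\sum_i A_{ji}B_{ik}=C'_{jk}$ and $\Phi$ is entanglement-breaking by construction (complete positivity and trace preservation are immediate from $\tr{\xi_i}=1$). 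The catch is that $A$ and $B$ need not be row-stochastic, so the psd-rank-versus-$\mathcal{C}(\hi)$ characterization does not apply to them verbatim. The fix I would use is a diagonal rescaling: discard every index $i$ with $s_i:=\sum_k B_{ik}=0$ (the corresponding column of $A$ does not affect $AB$, and since $C'$ is row-stochastic at least one $s_i$ survives), and replace $(A,B)$ by $(A\,\mathrm{diag}(s_i),\ \mathrm{diag}(s_i)^{-1}B)$. One then checks that the product is unchanged, that $\mathrm{diag}(s_i)^{-1}B$ is row-stochastic by construction and that the row-sums of $A\,\mathrm{diag}(s_i)$ equal those of $C'$, i.e.\ all $1$, and that the psd-ranks are unchanged since each scalar $s_i^{\pm 1}$ can be absorbed into the $i$-th psd block of a psd-factorization. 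After this both factors are bona fide communication matrices on $\hi$, the cited characterization supplies the required $\{\varrho_j\},N$ and $\{\xi_i\},M$, and the ansatz goes through.

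I expect the diagonal-rescaling step to be the only genuine obstacle: one has to notice that a psd-rank-bounded nonnegative factorization of a stochastic matrix can always be gauged into a product of two honest communication matrices without inflating the psd-ranks, and to dispose of the degenerate zero-row-sum indices cleanly so that the inner dimension stays finite and positive. Everything else is routine substitution into the measure-and-prepare form and invoking Horodecki--Shor--Ruskai and the psd-rank result of \cite{FiMaPoTiWo12,LeeWeideWolf2017} as black boxes.
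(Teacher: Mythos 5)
Your proposal is correct and follows essentially the same route as the paper: read off $C'=AB$ with $A=(\tr{\varrho_jN_i})$, $B=(\tr{\xi_iM_k})$ from the measure-and-prepare form for the forward direction, and invoke the psd-rank characterization of $\mathcal{C}(\hi)$ to reassemble an entanglement-breaking channel for the converse. The one point where you go beyond the paper is the diagonal-rescaling step: the paper's converse directly asserts $A,B\in\mathcal{C}(\hi)$ from $\psdrank{A},\psdrank{B}\leq d$, whereas the cited characterization applies to row-stochastic matrices, so your gauge fix (dropping zero-row-sum indices and passing to $A\,\mathrm{diag}(s_i)$, $\mathrm{diag}(s_i)^{-1}B$, which preserves the product, stochasticity and the psd-ranks) makes explicit a normalization the paper leaves implicit, and it is sound.
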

\begin{proof}
    Let first $\Phi: \lh \to \lh$ be an entanglement-breaking channel so that there exists some states $\{\xi_i\}_{i \in [l]} \subset \sh$ and a measurement $N \in \mathcal{M}_l(\hi)$ such that $\Phi(X) = \sum_{i=1}^l \tr{N_i X}\xi_i$ for all $X \in \lh$. Let $C'\in\mathcal{C}_{m,n}(\hi)$ be a communication matrix realized by $\Phi$ so that there exists some states $\{\varrho_j\}_{j \in [m]} \subset \sh$ and a measurement $M \in \mathcal{M}_n(\hi)$ such that 
    \begin{align}
        C'_{jk} &= \tr{\Phi(\varrho_j) M_k} = \sum_{i=1}^l \tr{\varrho_j N_i} \tr{\xi_i M_k}  \label{eq:eb}
    \end{align}
    for all $j \in [m]$ and $k \in [n]$. Now we can define  nonnegative stochastic matrices $A$ and $B$ as $A_{ji} = \tr{\varrho_j N_i}$ and $B_{ik}=\tr{\xi_i M_k}$ for all $i \in [l]$, $j \in [m]$ and $k \in [n]$. Clearly $A$ and $B$ are now also quantum communication matrices, $A, B \in \mathcal{C}(\hi)$, so that by \cite{FiMaPoTiWo12,LeeWeideWolf2017} we have that $\mathrm{rank}_{psd}(A)\leq d$ and $\mathrm{rank}_{psd}(B) \leq d$. On the other hand, suppose that $C=AB$ for some nonnegative matrices $A$ and $B$ such that $\mathrm{rank}_{psd}(A)\leq d$ and $\mathrm{rank}_{psd}(B)\leq d$. By \cite{FiMaPoTiWo12,LeeWeideWolf2017} we have that $A, B \in \mathcal{C}(\hi)$ so that we can go backwards in Eq. \eqref{eq:eb} to conclude that $C'$ can be implemented with an entanglement-breaking channel in some prepare-and-measure set-up.
\end{proof}

We note that actually solving the nonnegative and the psd-factorizations are known to be NP-hard problems \cite{Vavasis2009On,Shitov2016The} so even though it may not be practical, we can in principle determine if the communication matrix can be implemented by an entanglement-breaking channel. We also note that in the previous results even if we are able to find the required nonnegative and psd-factorizations, we still cannot conclude whether a given channel is entanglement-breaking or not. However, if we know the informational completeness of the set-up, we can make conclusions also about the given channel.

\begin{corollary}\label{corr:psdrank}
     Let $C = C_{\varrho,M}\in \mathcal{C}(\hi)$ be communication matrix on a $d$-dimensional Hilbert space $\hi$ implemented by some states $\{\varrho_j\}_j$ and a measurement $M$ such that $\rank{C}=d^2$. Let $\Phi: \lh \to \lh$ be a channel implementing a communication matrix $C'= C_{\Phi(\varrho),M}$. Then $\Phi$ is entanglement-breaking if and only if $C'$ admits a non-negative factorization $C' = AB$ such that
     \begin{equation}
        \max(\mathrm{rank}_{psd}(A),\mathrm{rank}_{psd}(B))\leq d \, .
    \end{equation}
    In this case for any realization of $\Phi$ in terms of some set of states $\{\xi_i\}_{i=1}^l \subset \sh$ and a measurement $N \in \mathcal{M}_l(\hi)$ for some $l \in \nat$ such that $\Phi(X) = \sum_{i=1}^l \tr{N_i X}\xi_i$ for all $X \in \lh$ we have that
          \begin{align}
          \min(\dim(V_{\xi}), \dim(V_N)) &\geq \rank{C'}\label{eq:ebrank1}\\
          \max(\dim(V_{\xi}), \dim(V_N)) &\leq l\label{eq:ebrank2},
          \end{align}
    and $\min(\dim(V_{\xi}), \dim(V_N)) = \rank{C'}$ whenever $\max(\dim(V_{\xi}), \dim(V_N)) = l$.
\end{corollary}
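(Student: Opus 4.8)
The plan is to treat the biconditional and the list of rank relations separately. For the equivalence, the forward implication is just the forward part of Prop.~\ref{prop:ebchannel} carried out with the given set-up: if $\Phi$ is entanglement-breaking, write $\Phi(X)=\sum_{i=1}^{l}\tr{N_iX}\xi_i$ with $N\in\mathcal{M}_l(\hi)$ and states $\{\xi_i\}_{i=1}^{l}$, so that $C'_{jk}=\tr{\Phi(\varrho_j)M_k}=\sum_{i}\tr{\varrho_jN_i}\tr{\xi_iM_k}$. Hence $C'=\hat A\hat B$ with $\hat A=(\tr{\varrho_jN_i})_{j,i}=C_{\varrho,N}$ and $\hat B=(\tr{\xi_iM_k})_{i,k}=C_{\xi,M}$, and since $\hat A,\hat B$ are themselves quantum communication matrices, $\psdrank{\hat A}\le d$ and $\psdrank{\hat B}\le d$ by \cite{FiMaPoTiWo12,LeeWeideWolf2017}. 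This produces exactly a factorization of the required type.

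For the converse I would first use that $\rank{C}=d^2$ forces, via Cor.~\ref{corr:infocomplete}, the states $\{\varrho_j\}_j$ and the measurement $M$ to be informationally complete, so by Prop.~\ref{prop:differentiability} there is a \emph{unique} channel producing $C'$ together with this set-up, namely $\Phi$. Given a factorization $C'=AB$ with $\psdrank{A},\psdrank{B}\le d$, Prop.~\ref{prop:ebchannel} already produces \emph{some} entanglement-breaking channel realizing $C'$; the remaining step is to upgrade this to the statement that $\Phi$ itself is entanglement-breaking. I would do this by arguing that the factorization can always be re-chosen so that $A$ is realized with the given states $\{\varrho_j\}_j$ and $B$ with the given measurement $M$: then the measure-and-prepare map $X\mapsto\sum_i\tr{N_iX}\xi_i$ read off these realizations coincides, by the uniqueness above, with $\Phi$. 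Concretely, after passing to basis subfamilies of $\{\varrho_j\}_j$ and $\{M_k\}_k$ (which does not change whether $\Phi$ is entanglement-breaking, since the subfamilies still span $\lsh$) the statistics maps $Y\mapsto(\tr{YM_k})_k$ and $Y\mapsto(\tr{\varrho_jY})_j$ become linear isomorphisms onto their images, and since the linear relations among the $\varrho_j$ are inherited by the rows of $C'$ (and dually for the $M_k$ and the columns) one can rewrite the rows of $B$ through the effects $M_k$ and the columns of $A$ through the states $\varrho_j$. The point that needs care -- and which I expect to be the main obstacle -- is to verify that positivity and normalisation survive this rewriting, so that one genuinely obtains a POVM $\{N_i\}$ and states $\{\xi_i\}$ rather than merely self-adjoint operators.

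For the rank relations assume $\Phi$ entanglement-breaking with $\Phi(X)=\sum_{i=1}^{l}\tr{N_iX}\xi_i$, and write $C'=\hat A\hat B$ with $\hat A=C_{\varrho,N}$ and $\hat B=C_{\xi,M}$ as above. The bound $\max(\dim V_\xi,\dim V_N)\le l$ is immediate, $V_\xi$ and $V_N$ each being spanned by $l$ operators. Applying Prop.~\ref{prop:rank} to $\hat A=C_{\varrho,N}$ and using $V_\varrho=\lsh$ gives $\dim V_N=\dim(V_\varrho\cap V_N)\le\rank{\hat A}\le\min(\dim V_\varrho,\dim V_N)=\dim V_N$, so $\rank{\hat A}=\dim V_N$; symmetrically $\rank{\hat B}=\dim V_\xi$ from $V_M=\lsh$. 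Hence $\rank{C'}=\rank{\hat A\hat B}\le\min(\rank{\hat A},\rank{\hat B})=\min(\dim V_\xi,\dim V_N)$, giving the first inequality. Finally suppose $\max(\dim V_\xi,\dim V_N)=l$: if $\dim V_N=l$ then $\rank{\hat A}=\dim V_N=l$, so the $m\times l$ matrix $\hat A$ has full column rank and therefore $\rank{C'}=\rank{\hat A\hat B}=\rank{\hat B}=\dim V_\xi$, whence $\min(\dim V_\xi,\dim V_N)=\rank{C'}$ because $\dim V_\xi\le l=\dim V_N$; the case $\dim V_\xi=l$ is symmetric, $\hat B$ then having full row rank so that $\rank{C'}=\rank{\hat A}=\dim V_N$. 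Thus the whole corollary reduces to Prop.~\ref{prop:ebchannel}, Cor.~\ref{corr:infocomplete}, Prop.~\ref{prop:differentiability} and Prop.~\ref{prop:rank}, the only substantive work being the transport of an arbitrary nonnegative factorization of $C'$ onto the given informationally complete set-up while keeping the factors completely positive and trace-normalised.
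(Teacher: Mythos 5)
Your forward implication and the rank statements are essentially the paper's own argument: from an entanglement-breaking realization you form $\hat A=(\tr{\varrho_jN_i})_{j,i}$ and $\hat B=(\tr{\xi_iM_k})_{i,k}$, bound their psd-ranks by $d$ via the cited characterization, use $\rank{C}=d^2$ and Cor.~\ref{corr:infocomplete} to get $V_\varrho=V_M=\lsh$, and then Prop.~\ref{prop:rank} gives $\rank{\hat A}=\dim V_N$, $\rank{\hat B}=\dim V_\xi$, from which $\rank{C'}\le\min(\dim V_\xi,\dim V_N)$, the trivial bound by $l$, and the equality case follow; your full-column/full-row-rank argument is interchangeable with the paper's use of $\rank{AB}=\rank{B}-\dim(\ker A\cap\im B)$.

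The genuine gap is the converse of the equivalence, which you flag but do not close. The paper's route is short: since $\rank{C}=d^2$, Prop.~\ref{prop:differentiability} says the set-up differentiates any two channels; Prop.~\ref{prop:ebchannel} supplies an entanglement-breaking channel $\Psi$ realizing $C'$, and the paper concludes $\Phi=\Psi$, hence $\Phi$ is entanglement-breaking. Your worry — that Prop.~\ref{prop:ebchannel} only furnishes $\Psi$ together with \emph{some} implementation of $C'$, not necessarily the given states and measurement, so the differentiability criterion $C_{\Phi(\varrho),M}=C_{\Psi(\varrho),M}\Rightarrow\Phi=\Psi$ does not literally apply — is a fair reading, and the paper's one-line identification of the two implementations is exactly where its proof is terse. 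But a proof must resolve this one way or the other, and yours does neither: you replace the paper's uniqueness step by a plan to ``transport'' an arbitrary nonnegative factorization onto the given set-up, rewriting the columns of $A$ through the $\varrho_j$ and the rows of $B$ through the $M_k$, and you then explicitly concede that you cannot verify that the resulting operators $N_i$ and $\xi_i$ are positive and normalized. That concession is precisely the missing content: linear rewriting only yields self-adjoint operators reproducing the statistics, and positivity does not follow from the spanning properties you invoke, so the if-direction of the corollary is not established in your proposal. Either adopt the paper's argument (stating clearly why the entanglement-breaking channel produced from the factorization can be taken to reproduce $C'$ on the given informationally complete set-up, so that differentiability forces it to coincide with $\Phi$), or supply an actual construction completing the transport; as written, the proposal proves only the ``only if'' half and the rank relations.
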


\begin{proof}
    Given a communication matrix $C = C_{\varrho,M}\in \mathcal{C}(\hi)$ such that $\rank{C} = d^2$, which means that $\dim(V_\varrho) = \dim(V_M) = d^2$ according to Corollary \ref{corr:infocomplete}. Now, suppose that we have a channel $\Phi:\lh\to\lh$ implementing the communication matrix $C'= C_{\Phi(\varrho),M}$. According to Prop. \ref{prop:ebchannel}, there exists an entanglement-breaking channel $\Psi$ implementing the same communication matrix if and only if $C'$ admits a non-negative factorization $C' = AB$ such that $\max(\mathrm{rank}_{psd}(A),\mathrm{rank}_{psd}(B))\leq d$. Since $\rank{C} = d^2$, it is possible to perfectly differentiate between any two channels using $C$, according to Prop. \ref{prop:differentiability}. Since $\Phi$ and $\Psi$ give rise to the same communication matrix, this means that $\Phi=\Psi$. Therefore, all implementations of $C'$ are entanglement-breaking if and only if $C'$ admits a non-negative factorization $C' = AB$ such that $\max(\mathrm{rank}_{psd}(A),\mathrm{rank}_{psd}(B))\leq d$.

    Let $\Phi$ be such an entanglement breaking channel. Then, $\Phi$ can be realized using some set of states $\{\xi_i\}_{i=1}^l \subset \sh$ and a measurement $N \in \mathcal{M}_l(\hi)$ for some $l \in \nat$ such that $\Phi(X) = \sum_{i=1}^l \tr{N_i X}\xi_i$ for all $X \in \lh$. This gives us the nonnegative factorization $C'=AB$ such that $A_{ji} = \tr{\varrho_jN_i}$ and $B_{ik} = \tr{\xi_iM_k}$. Since the rank of product of two matrices is less than the smallest rank of the two, we have that $\rank{C'} \leq \min(\rank{A},\rank{B})$. Since both $A$ and $B$ are communication matrices and since $\dim(V_\varrho) = \dim(V_M) = d^2$, according to Prop. \ref{prop:rank}, we have that $\dim(V_N) = \rank{A}$ and $\dim(V_\xi) = \rank{B}$. Therefore, we have that $\min(\dim(V_{\xi}), \dim(V_N)) \geq \rank{C'}$. Furthermore, since $V_\xi$ and $V_N$ are spanned by the $l$ states and the $l$ effects, we must have that $\max(\dim(V_{\xi}), \dim(V_N)) \leq l$.

    Now, according to the rank-nullity theorem, for any $m\times l$ matrix $A$ and $l\times n$ matrix $B$, we have
    \begin{equation}\label{eq:rnt}
        \rank{AB} = \rank{B} - \dim(\ker(A)\cap\im(B)).
    \end{equation}
    Suppose that $\max(\dim(V_{\xi}), \dim(V_N)) = l$. That is, either $\rank{B} \leq \rank{A} = l$ or $\rank{A} \leq \rank{B} = l$. If $\rank{A} = l$, then $\ker(A) = \{0\}$, and by Eq. \eqref{eq:rnt}, we have that $\rank{C'} = \rank{AB} = \rank{B} = \min(\dim(V_{\xi}), \dim(V_N))$. On the other hand, if $\rank{A} \leq \rank{B} = l$, then $\ker(A)\cap\im(B) = \ker(A)$ and by Eq. \eqref{eq:rnt}, we have $\rank{C'} = l - \dim(\ker(A)) = \rank{A} = \min(\dim(V_{\xi}), \dim(V_N))$.
\end{proof}

We also note that $\ker(A) \neq \{0\}$ if and only if there exists some non-zero vector $\vec{u}\in\real^l$ such that $(A\vec{u})_j = \sum_{i\in[l]}u_i\tr{\varrho_jN_i} = \tr{\varrho_j\left(\sum_{i\in[l]}u_iN_i\right)} = 0$ for all $j\in[m]$. Since $\{\varrho_j\}$ is informationally complete, this is possible if and only if the effects of the measurement $N$ are linearly dependent, which can happen if and only if the measurement $N$ is non-extremal \cite{Watrous_2018}. In other words, the equality of Eq.~\eqref{eq:ebrank1} holds whenever $N$ is an extremal measurement.

However, Corollary \ref{corr:psdrank} still has to be stated as inequalities in general, since there exist cases where both inequalities are strict. We demonstrate this below with an example.

\begin{example}
    Let $d=2$ and the prepare-and-measure set-up be given by the six-outcome measurement $M = \{M_k^\pm = \frac{1}{6}(\id_2 \pm \sigma_k)\}_{k\in\{x,y,z\}}$ and the six states $\varrho_j^\pm = 3M_j^\pm$ for $j\in\{x,y,z\}$. The communication matrix before introducing any channel is given by,
    
    \begin{equation}
        C = \frac{1}{6}\begin{pmatrix}2 & 0 & 1 & 1 & 1 & 1\\0 & 2 & 1 & 1 & 1 & 1\\1 & 1 & 2 & 0 & 1 & 1\\1 & 1 & 0 & 2 & 1 & 1\\1 & 1 & 1 & 1 & 2 & 0\\1 & 1 & 1 & 1 & 0 & 2\end{pmatrix}.
    \end{equation}
    Now, consider the measure-and-prepare channel $\Phi:\mathcal{L}(\complex^2)\to\mathcal{L}(\complex^2)$ realized by four-outcome measurement $N = \{N_i^\pm = \frac{1}{4}(\id_2 \pm \sigma_i)\}_{i\in\{x,y\}}\}$ and the four states $\{\xi_i^\pm = \frac{1}{2}(\id_2 + \sigma_i)\}_{i\in\{x,y\}}\}\}$. Clearly $\xi_i^+ = \xi_i^-$. The non-negative factors of $C'$, given by $A = \left(\tr{\varrho_j^\pm N_i^\pm}\right)$ and $B = \left(\tr{\xi_i^\pm M_k^\pm}\right)$, will then become
    \begin{equation}
         A= \frac{1}{4}\begin{pmatrix}2 & 0 & 1 & 1\\0 & 2 & 1 & 1\\1 & 1 & 2 & 0\\1 & 1 & 0 & 2\\1 & 1 & 1 & 1\\1 & 1 & 1 & 1\end{pmatrix},\;B = \frac{1}{6}\begin{pmatrix}2 & 0 & 1 & 1 & 1 & 1\\2 & 0 & 1 & 1 & 1 & 1\\1 & 1 & 2 & 0 & 1 & 1\\1 & 1 & 2 & 0 & 1 & 1\end{pmatrix}.
    \end{equation}
    It can be easily seen that $\rank{A} = 3$ and $\rank{B} = 2$. Furthermore, the communication matrix in the presence of the channel, $C'$, is given by
    \begin{equation}
        C' = \frac{1}{12}\begin{pmatrix}3 & 1 & 3 & 1 & 2 & 2\\3 & 1 & 3 & 1 & 2 & 2\\3 & 1 & 3 & 1 & 2 & 2\\3 & 1 & 3 & 1 & 2 & 2\\3 & 1 & 3 & 1 & 2 & 2\\3 & 1 & 3 & 1 & 2 & 2\end{pmatrix},
    \end{equation}
    which is a single point contraction according to Example \ref{ex1}, and has $\rank{C'} = 1$. Therefore, we have that $\rank{C'} < \min(\rank{A},\rank{B}) < \max(\rank{A},\rank{B}) < l$, which shows that the inequalities in Corollary \ref{corr:psdrank} can be strict.
\end{example}

To summarize, we see that if we are using an informationally complete set-up, the conditions of Prop. \ref{prop:ebchannel} become necessary and sufficient for a given channel to be entanglement-breaking. Additionally in this case from the rank of the communication matrix $C'$ after the application of the channel we also get information about the measure-and-prepare form of the entanglement-breaking channel. Namely, we see that the minimal dimension of the measure-and-prepare implementation as well as the number of states and the number of outcomes of the measurement must be at least the rank of $C'$. In particular, in the special case when we detect that $\rank{C'}=d^2$ we can conclude that any measure-and-prepare realization of the channel must be informationally complete.

\section{Conclusions}\label{sec:concl}

In this work we have extended the communication matrix framework and included the quantum processes connecting preparations and measurements. Using this framework, we have examined in detail all the prerequisites for channel differentiation and channel tomography. While channel differentiation is possible with only the knowledge of the informational completeness of the set-up, for full channel tomography we need to have the full description of the set-up. We show that the information completeness can be readily checked from the rank of the communication matrix related to the set-up. Furthermore, we show that for the most practical prepare-and-measure scenarios, we can self-test the set-up up to unitary or anti-unitary freedom by calculating the information storability of the communication matrix. We can then use this information about the description of the set-up to identify quantum channels. 

Subsequently we have looked into scenarios where some additional information and/or resources are provided about the channels, which are used to certify that the set-up can characterize certain classes of channels or deduce certain properties of the channels. In particular we looked into the case where we can use the channel multiple times which did not help with differentiability between all channels. On the other hand we saw that with the additional resources of having access to the completely depolarizing channel and an informationally complete set of test states, we can either verify the unitality of an unknown channel or use the set-up to differentiate between any two unital channels. Similarly we were able to show that in principle also the entanglement-breaking form of a channel can be verified by calculating the nonnegative and psd-factorizations of the communication matrix. Our list of channel properties is by no means exhaustive and one could for example look into more finer properties of qubit channels which are much more studied. One could also combine some of the studied cases and look whether for example using the channel multiple times could help with verifying some properties of the channels.

Although we try to focus our results on quantum theory for the majority of this work, it is noteworthy to mention that the inequalities in Prop. \ref{prop:rank} should also directly translate to any other operational theories, as the inequalities can be derived using the same methods and the appropriate inner product (as opposed to the Hilbert-Schmidt inner product). However, the same cannot be said for Prop. \ref{prop:implement}, since maximal information storability is not sufficient to certify the set-up in all operational theories to our knowledge.

\section*{Acknowledgements}
This work is supported by DeQHOST APVV-22-SR-FR-0018. SP is supported by skQCI DIGITAL-2021-QCI-01 No. 101091548 and QUAS VEGA 2/0164/25. LL is supported by the Business Finland project BEQAH (Between Quantum Algorithms and Hardware). MZ is supported by QENTAPP 09I03-03-V04-00777.

\bibliographystyle{unsrturl}
\bibliography{Bib}

\vspace*{1cm}

\end{document}